\documentclass[11pt]{article}

\usepackage{latexsym}
\usepackage{amsmath}
\usepackage{amssymb}
\usepackage{amsthm}
\usepackage{amscd}
\usepackage[mathscr]{eucal}
\usepackage{fullpage}
\usepackage{graphicx}
\usepackage{subfigure}
\usepackage{float}
\usepackage{psfrag}
\usepackage{rotating}
\usepackage{color}

\newcommand{\R}{\mathbb{R}}

\renewcommand{\thesection}{\Roman{section}}

\definecolor{ggreen}{cmyk}{0.7,     0,      0.9,      0}
\definecolor{viol}{cmyk}{0.3,1,0,0}
\definecolor{myred}{cmyk}{0.1, 1, 0.5, 0}
\definecolor{bblue}{rgb}{0.2, 0.29996, 0.8 }

\theoremstyle{plain}

\newtheorem{theorem}{Theorem}

\newtheorem{lemma}{Lemma}
\newtheorem{remark}{Remark}
\newtheorem{proposition}{Proposition}

\newcounter{mnotecount}[section]

\renewcommand{\themnotecount}{\thesection.\arabic{mnotecount}}

\newcommand{\mnote}[1]
{\protect{\stepcounter{mnotecount}}$^{\mbox{\footnotesize
			$
			\bullet$\themnotecount}}$ \marginpar{
		\raggedright\tiny\em
		$\!\!\!\!\!\!\,\bullet$\themnotecount: #1} }
\begin{document}
\title{\bf \huge{Global dynamics of Yang-Mills field and
perfect-fluid Robertson-Walker cosmologies}}

\author{Artur Alho$^{(1)}$, Vitor Bessa$^{(1,2,3)}$ and Filipe C. Mena$^{(1,2)}$\\\\
{\small $^{(1)}$Centro de An\'alise Matem\'atica, Geometria e Sistemas Din\^amicos,}
\\
{\small	Instituto Superior T\'ecnico, Universidade de Lisboa, Av. Rovisco Pais 1, 1049-001 Lisboa, Portugal}\\
{\small $^{(2)}$Centro de Matem\'atica, Universidade do Minho, 4710-057 Braga, Portugal}
\\
{\small $^{(3)}$Faculdade de Ci\^encias, Universidade do Porto, R. Campo Alegre, 4169-007 Porto, Portugal}
}
\maketitle
\begin{abstract}
We apply a new global dynamical systems formulation to flat Robertson-Walker cosmologies with a massless and massive Yang-Mills 
field 
and a perfect-fluid with linear equation of state as the matter sources. This allows us to give proofs concerning the global dynamics of the models including asymptotic source-dominance towards the past and future time directions. For the pure massless Yang-Mills field, we also contextualize well-known explicit solutions in a global (compact) state space picture.	
\\\\
Keywords: Einstein-Euler-Yang-Mills system; Robertson-Walker Cosmologies; Dynamical Systems.
\\\\
AMS subject classification: 34N20, 83C75, 83F05
\end{abstract}
\section{Introduction}
%
The standard cosmological models are based on Einstein's theory of General Relativity and, in particular, on 4-dimensional Lorentzian manifolds, called {\em spacetimes}, with spatially homogeneous and isotropic metrics, called {\em Robertson-Walker metrics}. The dynamics on spacetime is governed by the {\em Einstein field equations} which, in general, are nonlinear partial differential equations but, under the assumptions of spatial homogeneity, turn into a system of nonlinear ordinary differential equations (ODEs). 

Many recent rigorous results about the dynamics of cosmological models result from the application of the theory of dynamical systems to ODE systems of Einstein field equations, see e.g.Refs \cite{WainElis} and \cite{Col03}. A common procedure in those approaches is to replace the metric variables, which form the system, with dimensionless variables using conformal rescalings  as well as suitable normalization factors, which intend to regularize the system on a compact state space. So, part of the problem is to understand how can this be achieved and how, within this framework, can one finally construct an autonomous dynamical system from physically relevant cosmological models.

Another potential complication in this problem are the matter fields considered in the Einstein field equations.  The typical matter content of a cosmological model is a perfect fluid with a linear equation of state \cite{WainElis}. However, since Yang-Mills fields describe the dynamics of elementary particles, they can also play an important role in the physics of the early universe, see e.g. Ref.\cite{pys-rep}. It is, therefore, of interest to obtain rigorous mathematical results for systems including such matter fields.  

So, given the importance of Yang-Mills fields in cosmology, particle physics and string theory, we revisit the problem of the dynamics of massless and massive Yang-Mills fields in co-evolution with a perfect fluid with a linear equation of state, on Robertson-Walker geometries. As we shall see, this results in the problem of analyzing a nonlinear ODE system of Einstein-Euler-Yang-Mills equations.  For the remaining of this section we shall, first, describe how this system is derived, then, explain the principal techniques we use comparing with previous approaches and, finally, summarize the main results of the paper.  

We then consider a spacetime manifold $(M,g)$, with Lorentzian metric $g$ satisfying the Einstein field equations,
\begin{equation}
\label{EFEs-general}
\text{Ric}-\frac{1}{2}S g=T
\end{equation}
where $\text{Ric}$ is the Ricci tensor and $S$ the scalar curvature of $(M,g)$, while $T$ is the energy-momentum tensor encoding the spacetime physical contents (see Ref.\cite{Choquet} for more details).

We assume that $(M,g)$ is spatially homogeneous and isotropic of the type $M=\R\times E^3/SO(3)$, where  the euclidean group $E^3$ is the isometry group of the spatial hypersurfaces. The most general form for $g$ which is invariant under the $E^3$ group is the flat Robertson-Walker (RW) metric, which in cartesian coordinates $(t,x,y,z)\in \R^+\times \R^3$, is given by
\begin{equation}
\label{metric}
g= -dt^2 + a^2(t)(dx^2+dy^2+dz^2),
\end{equation}
where $a$ is a $C^2$ positive function of (comoving) time $t$ called {\em scale factor}. 

We consider scalar fields as well as vector fields defined on $M$ which are compatible with our symmetry assumptions. 
In particular, we shall consider perfect fluid (scalar) matter with density $\rho_{\mathrm{m}}(t)$ and pressure $p_{\mathrm{m}}(t)$ and Yang-Mills (4-vector) fields $\vec A(t)$ with gauge group $G$. We write $A_\mu=A^a_{~\mu} L_a$, where greek indices $\mu,\nu,...=0,1,2,3$ denote spacetime indices, Latin indices $a, b,...=1,2,3$ are internal indices, and $L_a$ the infinitesimal generators of the Lie algebra associated to $G$ in the fundamental representation  $\text{Tr}[L^a L^b]=-\delta^{ab}/2$, where we use the minus sign convention.   

These two types of fields will be encoded in two tensors $T_{\mathrm{m}}$ and $T_{\mathrm{YM}}$ defined on $M$ such that
\begin{equation}
\label{Tmunugeral}
T_{\mu\nu}=T_{\mathrm{m}\mu\nu}+T_{\mathrm{YM}\mu\nu},
\end{equation}
Considering a globally defined timelike vector field $\vec u$ corresponding, physically, to the 4-velocity of the fluid, we may decompose $T_{\mathrm{m}\mu\nu}$ with respect to $\vec u$ as 
\begin{equation}
\label{TmunuPF}
T_{\mathrm{m}\mu\nu}=(\rho_{\mathrm{m}}+p_{\mathrm{m} }) u_\mu u_\nu+p_{\mathrm{m} }g_{\mu\nu},
\end{equation}
which must satisfy the Euler equations
\begin{equation}
\label{euler-eq}
\nabla^\mu T_{\mathrm{m}\mu\nu}=0.
\end{equation}
In our coordinate system, given $u^\mu=\delta^\mu_{~0}$, we simply get $T_{\mathrm{m}\mu\nu}=\text{diag} (\rho_\mathrm{m},p_\mathrm{m},p_\mathrm{m},p_\mathrm{m})$.
We shall further assume a linear equation of state 
$$p_\mathrm{m}=(\gamma_\mathrm{m}-1)\rho_\mathrm{m},$$
with $\rho_\mathrm{m}(t)\ge 0$ and the constant adiabatic index $\gamma_\mathrm{m}$ satisfying $0<\gamma_\mathrm{m}\leq2$, where $\gamma_\mathrm{m}=1$ corresponds to a pressureless (dust) fluid, $\gamma_\mathrm{m}=4/3$ to radiation, while the extreme values $\gamma_\mathrm{m}=0$ and $\gamma_{\mathrm{m}}=2$ correspond to a positive cosmological constant and a stiff fluid, respectively.
In turn, $T_{\mathrm{YM}\mu\nu}$ is written as
\begin{equation} 
\label{TmunuYM}
T_{\mathrm{YM} \mu\nu}=-\frac{1}{2e^2}\text{Tr}\left[F_{\mu\lambda}F^{~\lambda}_ {\nu}-\frac{1}{4}g_{\mu\nu}F_{\lambda\sigma}F^{\lambda\sigma}\right] -\frac{\mu^2}{2}\text{Tr}\left[2 A_\mu A_\nu- g_{\mu\nu}A_\lambda A^\lambda \right],
\end{equation}
where $e>0$ is the gauge coupling constant, $\mu\ge 0$ the mass of the gauge field, and the field strength 
\begin{equation} 
F_{\mu\nu}:=\partial_\mu A_\nu-\partial_\nu A_\mu+[A_\mu,A_\nu]
\end{equation} 
satisfies the Yang-Mills equation
\begin{equation}
\label{yang-mills-eq}
\nabla_\mu F_{\alpha\beta}+[A_\mu,F_{\alpha\beta}]=0.
\end{equation}
Following Ref.~\cite{bentoetal93}, we assume that the vector fields $\vec A$ have a global $SO(3)$ symmetry 
in which case $L_a$ are 
the infinitesimal generators of the internal $SO(3)$ group, satisfying $[L_a,L_b]=\varepsilon_{abc}L_c$. Imposing, furthermore, that $A_\mu$ is $E^3$ symmetric, and fixing the gauge freedom with the temporal (Hamiltonian) gauge leads to
\begin{equation}
\label{hamiltonian-gauge}
A_0=0,~~~~~~A^a_{~i}(t)= \chi (t) \delta^a_{~i},
\end{equation} 
where $\chi(t)$ is a $C^2$ function of $t$ and we denote space indices with $i=1,2,3$.   

It turns out that, under our symmetry assumptions, the Yang-Mills field has only a single "scalar" degree of freedom~\cite{bentoetal93,GV91} and, using 
$\text{Tr}[F_{\mu\nu}F^{\mu\nu}]=3(\dot \chi^2/a^2-\chi^4/(4a^4)),~\text{Tr}[A_{\mu}A^{\mu}]=-3\chi^2/(2a^2), ~\text{Tr}[F_{0i}F^{0i}]=-3\dot \chi^2/(2a^2)$, then the tensor $T_{\mathrm{YM}\mu\nu}$ can also be decomposed with respect to $\vec u$ on a "perfect fluid" form as in \eqref{TmunuPF}:
\begin{equation}
\label{TmunuYM2}
T_{{\mathrm{YM}}\mu\nu}=(\rho_{\mathrm{YM}}+p_{\mathrm{YM}}) u_{\mu} u_\nu+p_{\mathrm{YM}} g_{\mu\nu},
\end{equation}
for appropriate identifications of the quantities in \eqref{TmunuYM} with $\rho_{\mathrm{YM}}=T_{{\mathrm{YM}}00}$ and $p_{\mathrm{YM}}=(1/3)T_{{\mathrm{YM}}i}^{\quad\,\, i}$, 
which we give ahead in \eqref{rhoYM} using new scalar variables defined in~\eqref{Newscalars}. 
In particular the level sets of $\chi$ coincide with the surfaces of simultaneity of observers comoving with the fluid.

For the (conformal invariant) massless Yang-Mills field ($\mu=0$), the resulting stress-energy tensor is trace-free, so that its effective equation of state is that of a radiation fluid and the model is explicitly solvable~\cite{bentoetal93,GV91}. The massive case, $\mu\ne 0$, has been studied in~\cite{bentoetal93} using a dynamical systems approach, and the inclusion of a dust and radiation fluid has been discussed in~\cite{BYM05}.

The evolution and constraint equations are then obtained from \eqref{EFEs-general}, using \eqref{metric} on the left-hand-side (which gives $\text{Ric}$ and $S$), and using \eqref{Tmunugeral} on the right-hand-side satisfying \eqref{euler-eq} and \eqref{yang-mills-eq}, under the above assumptions. 
So, the Einstein-Euler-Yang-Mills system in a flat Robertson-Walker geometry reduces to the following system of nonlinear ODEs:
\begin{subequations}
	\begin{align}
	H^{2} &= \left(\frac{\dot{\chi}}{2\sqrt{2}a e}\right)^2+\left(\frac{\chi}{2^{\frac{1}{4}}2a\sqrt{e}}\right)^4 
	+\left(\frac{\mu\chi}{2a}\right)^2+\frac{\rho_\mathrm{m}}{3}.
	\label{Hsquared} \\
	\label{ddot-chi}
	\ddot{\chi} &= -H\dot{\chi}-\frac{\chi^{3}}{2a^2}-2 \mu^2 e^2\chi \\
	\dot{\rho}_\mathrm{m} &= -3H\gamma_\mathrm{m}\rho_\mathrm{m}\\
	\dot{H} &=  -\left(\frac{\dot{\chi}}{2ae}\right)^2 - \left(\frac{\chi}{2a\sqrt{e}}\right)^4-\left(\frac{\mu\chi}{2a}\right)^2 -\frac{\gamma_\mathrm{m}}{2}\rho_\mathrm{m} \\
		\dot{a} &=Ha \label{main1}
	\end{align}
\end{subequations}
where the overdot denotes a derivative with respect to $t$ and $H(t):=\dot{a}/a$ is the Hubble function.

Regarding $\dot{\chi}$ as a new dependent variable, the first equation can be seen as a constraint for the variables $(\chi,\dot{\chi},\rho_\mathrm{m},H,a)$. 
By further introducing
\begin{equation}\label{Newscalars}
\phi(t):=\frac{\chi}{\sqrt{2}a}~~~\text{and}~~~\psi(t):=\frac{\dot{\chi}}{\sqrt{2}ae},
\end{equation}
the equation for $a(t)$ decouples, and leaves a reduced dynamical system for the state vector $(\phi,\psi,\rho_\mathrm{m},H)$ given by
\begin{subequations}	\label{eqspsi1}
	\begin{align}
	\dot{\phi} &=-H\phi+e\psi \\
	\dot{\psi} &=-2H\psi -\frac{\phi^3}{e}-2\mu^2e\phi \\
	\dot{\rho}_\mathrm{m} &= -3H\gamma_\mathrm{m}\rho_\mathrm{m} \\
	\dot{H} &=  {-\frac{\psi^2}{2} - \frac{\phi^4}{4e^2} -\mu^2 \frac{\phi^2}{2}-\frac{\gamma_\mathrm{m}}{2}\rho_\mathrm{m}}, 
	\end{align}
\end{subequations}
with constraint
\begin{equation}
\label{eqspsi-last}
H^{2} = \frac{\psi^2}{4}+\frac{\phi^4}{8e^2}+\frac{\mu^2 }{2}\phi^2+\frac{\rho_\mathrm{m}}{3}.
\end{equation}
The Yang-Mills field generates an effective energy density $\rho_{\mathrm{YM}}\geq0$ and pressure $p_{\mathrm{YM}}$, given by
\begin{subequations}
\label{rhoYM}
	\begin{align}
	\rho_{\mathrm{YM}} (t)&:= 
	  3\Big[\frac{\psi^2}{4}+\frac{\phi^4}{8e^2}+\frac{\mu^2 }{2}\phi^2\Big] \\
	p_{\mathrm{YM}}(t) &:= 
	  \frac{\psi^2}{4}+\frac{\phi^4}{8e^2}-\frac{\mu^2 }{2}\phi^2,
	\end{align}
\end{subequations}
from which we define the function
\begin{equation}
\label{gammafunction}
\gamma_{\mathrm{YM}}(t) := 1+\frac{p_{\mathrm{YM}}}{\rho_{\mathrm{YM}}}\,. 
\end{equation}
In~\eqref{EFEs-general}, we have fixed physical units such that $8\pi G=c=1$, where $G$ is the Newton gravitational constant and $c$ the speed of light. With this choice, we have that  $[t]=L,~[H]=L^{-1},~[e]=L^{-1},~[\phi]=L^{-1},~[\psi]=L^{-1}$, whereas $\mu$ is dimensionless.

Our aim is to apply a new global dynamical systems formulation adapted from the problem of a minimally coupled scalar field having a zero local minimum of the potential, such as the Klein-Gordon field~\cite{alhugg15} or more general monomial potentials~\cite{alhetal15}. Similar methods have also been applied to $\alpha$-attractor $\mathrm{E}$ and $\mathrm{T}$-models of  inflation in~\cite{alhetal17} as well as to the Starobinskii model of modified $f(R)$ gravity theory~\cite{alhetal16}. 

The new formulation has several advantages with respect to the original variables and which are commonly used in the literature, see e.g.~Refs.\cite{bentoetal93} and \cite{bessaetal}. To see this, consider for simplicity the state vector $(\phi,\psi,H)$, i.e. with no fluid matter content. The state space consists of a surface defined by the constraint~\eqref{eqspsi-last} with the fixed point $\mathrm{M}$ located at $(0,0,0)$, which is the only fixed point of system~\eqref{eqspsi1} and\eqref{eqspsi-last}, see Fig.~\ref{SScone}. This fixed point joins the two disconnected parts, having either $H>0$ or $H<0$, i.e., preserving the sign of $H$. We are interested in expanding cosmologies so, from now on, we will restrict the analysis to the upper half of the state space where $H>0$. By solving for $H$ in~Ref.\eqref{eqspsi-last} and inserting the positive root in the evolution equations, leads to an unconstrained two-dimensional dynamical system on the plane. This system might have differentiability problems at the origin, where lies the full degenerated Minkowski fixed point $\mathrm{M}$. The blow up of such fixed point can be found in ~\cite{bentoetal93} where it was shown that it is a local focus. 

However, as it will be shown here, in the present formulation this fixed point appears naturally as a periodic orbit and provides indeed the correct picture
(this also clarifies the issue of asymptotic self-similarity and manifest self-similarity breaking as discussed in~\cite{alhetal15}). This fact is related to the existence of a conserved quantity for the system: the expansion normalized effective energy density due to the Yang Mills field 
\begin{equation}
\Omega_{\mathrm{YM}}:=\frac{\rho_{\mathrm{YM}}}{3H^2}.
\end{equation} 

Another relevant aspect of this formulation concerns the compactification of the state space on the plane, in which Poincar\'e method is usually the standard approach. The Poincar\'e compactification does not take into account the natural topological state-space structures inherent to each particular model, and might lead to expensive  computations (as exemplified in~Ref.\cite{bessaetal}). 
Instead, the use of (dimensionless) expansion normalized variables, gives a very natural compactification of the state space  (where $H\rightarrow+\infty$), in which self-similar solutions appear as hyperbolic fixed points.

Furthermore, when introducing matter in the form of a perfect fluid with a linear equation of state, the state space becomes the region limited by a quadratic surface (see Fig. \ref{SScone}), and the old formulation would also lead to difficulties when discussing asymptotic source dominance since all orbits tend to a single degenerated fixed point $\mathrm{M}$. 
Instead, the correct picture of attractors being periodic orbits leads naturally to the use of averaging techniques from dynamical systems theory, see e.g.~\cite{SVM00}. This, in turn, allows us to give rigorous proofs concerning the asymptotics when matter models other than the Yang-Mills field are present. 

Finally, this framework is the starting point for considering less restrictive geometries like in the spatially homogeneous but anisotropic spacetimes, an issue we shall discuss further in Sec.\ref{CR}
\begin{figure}[ht!]
	\begin{center}
		\includegraphics[width=0.30\textwidth]{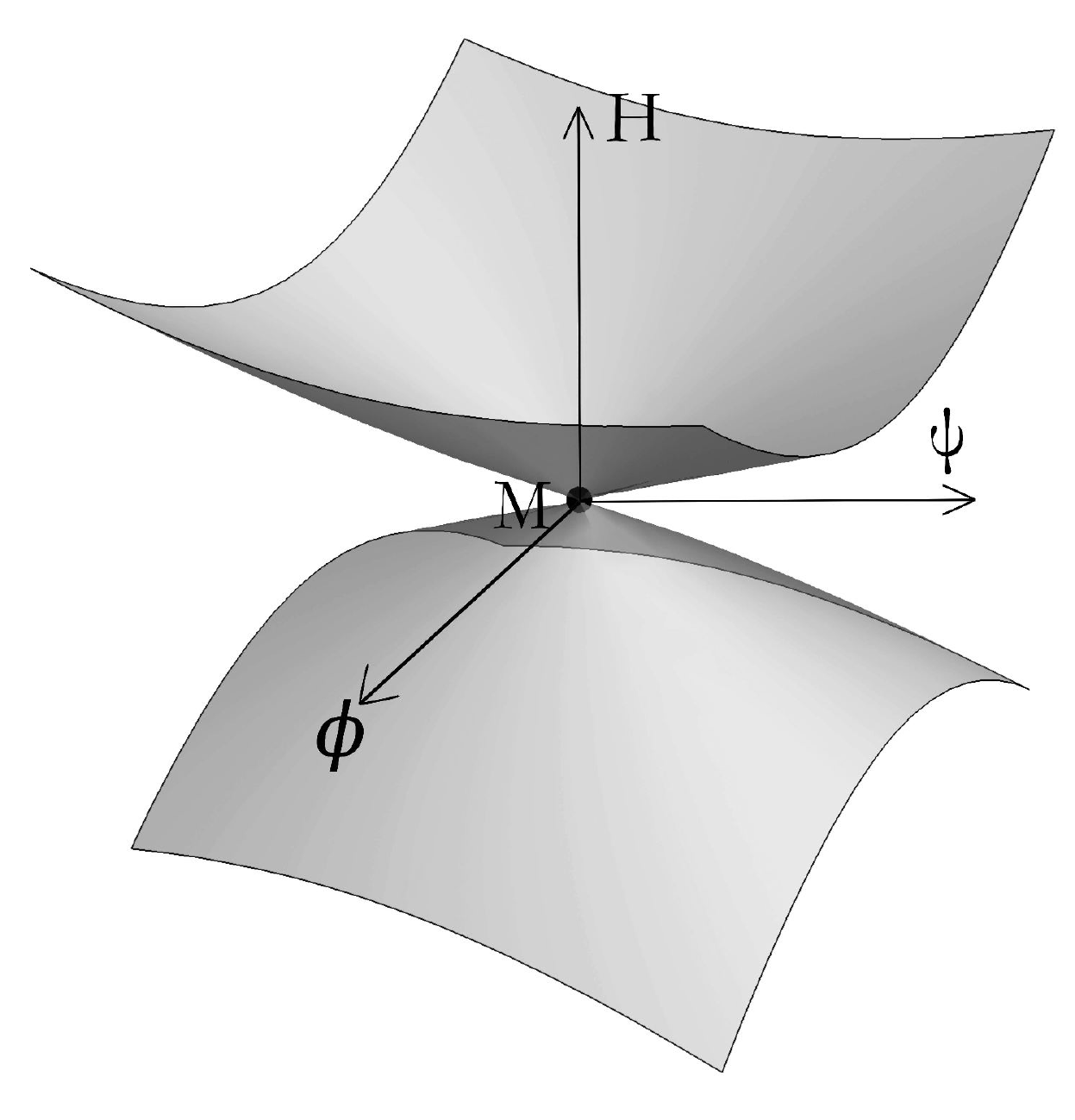}
	\end{center}
	\vspace{-0.5cm}
	\caption{The state-space of system~\eqref{eqspsi1} defined by the constraint~\eqref{eqspsi-last}.}
	\label{SScone}
\end{figure}

The paper is structured as follows: in Section~\ref{massless}, we consider the simplest model of a massless Yang-Mills field and a fluid with linear equation of state. We reformulate the Einstein field equations to a 3-dimensional dynamical system on a compact state-space, followed by an analysis of the flow which yields a global description of the solution space including its asymptotic behavior. For the pure massless Yang-Mills invariant subset ($\rho_\mathrm{m}=0$), the field equations can be further reduced to an analytical 2-dimensional unconstrained dynamical system which is integrable in terms of elliptic functions, thus contextualizing this well-known explicit  solutions in a global (compact) state-space picture.
In Section~\ref{massive}, we consider the massive Yang-Mills field together with the fluid matter model. In this case, the field equations are reformulated as a 4-dimensional dynamical system with a constraint (for a study of constraint systems in cosmology see Refs.\cite{WainElis} and \cite{HW92} ). We make a global analysis of the flow and give rigorous proofs concerning the asymptotic behavior of general solutions both in the past and future time directions. 
 We conclude the paper with a brief discussion of Yang-Mills fields in anisotropic cosmologies.
\section{Massless Yang-Mills field (case $\mu=0$)}
\label{massless}
For the massless Yang-Mills field, the ratio $p_{\mathrm{YM}}/\rho_{\mathrm{YM}}$ is constant and the function $\gamma_{\mathrm{YM}}(t)$, defined in \eqref{gammafunction}, is simply given by
\begin{equation}
\gamma_{\mathrm{YM}}= \frac{4}{3}.
\end{equation}
Hence, the massless Yang-Mills field can be view as an effective radiation fluid, which basically turns the problem into that of a two-fluid cosmology. However, it is instructive to consider first this simple model, since it allows us to introduce some basic definitions and illustrate how a global dynamical systems formulation of the original equations can be constructed. It will also allow us to situate well-known explicit solutions in a global state space picture, as well as emphasizing the differences that arise in the more complicated case of the massive Yang-Mills field. 

We assume an expanding cosmology $H(t) > 0$, and introduce the (dimensionless) $H$-normalized variables
\begin{equation}
X_1=\frac{\phi}{2^{\frac{3}{4}}\sqrt{eH}},\qquad\Sigma_{\mathrm{YM}}=\frac{\psi}{2H},\qquad  \tilde{T}=\sqrt{\frac{\sqrt{2}e}{H}}, \qquad \Omega_{\mathrm{m}}=\frac{\rho_{\mathrm{m}}}{3H^2},
\end{equation}
together with the number of $e-$folds $N=\ln{(a/a_0)}$, where $a_0$ is some reference epoch at which $N=0$, and
\begin{equation}\label{e-folds}
\frac{dN}{dt}= H.
\end{equation}
Then, the system of Eqs. \eqref{eqspsi1} and \eqref{eqspsi-last}, in the new variables, reduces to a \emph{local} 3-dimensional dynamical system
 \begin{subequations}\label{Dyn1}
  \begin{align}
  \frac{dX_1}{dN} &= -\frac{1}{2}\left[(1-q)X_1-2\tilde{T}\Sigma_{\mathrm{YM}}\right]\\
   \frac{d\Sigma_{\mathrm{YM}}}{dN} &=-\left[(1-q)\Sigma_{\mathrm{YM}}+2\tilde{T}X^3_1\right] \\
   \frac{d\tilde{T}}{dN} &=\frac{1}{2}(1+q)\tilde{T}, 
  \end{align}
 \end{subequations}
where we make use of the fact that the constraint 
\begin{equation}
\label{constr}
1-\Omega_{\text m}=X^4_1+\Sigma^{2}_{\mathrm{YM}} \,
\end{equation}
is linear in $\Omega_\mathrm{m}$, to solve for $\Omega_{\mathrm{m}}$, and where we introduced  the so-called {\em deceleration parameter} $q$, defined via $\dot{H}=-(1+q)H^2$, i.e.
\begin{equation}
\begin{split}
\label{q-equation}
q 
&=-1+2\left(\Sigma^{2}_{\mathrm{YM}}+X^4_1\right)+\frac{3}{2}\gamma_{\text m}\Omega_{\text m}  \\
&=  1+\frac{3}{2}\left(\gamma_{\text m}-\frac{4}{3}\right)\left(1-\Sigma^{2}_{\mathrm{YM}}-X^4_1\right).
\end{split}
\end{equation}
Since $\Omega_\mathrm{m}\geq0$, the constraint equation~\eqref{constr} implies that
\begin{equation}\label{BoundVar}
 -1\leq X_1\leq 1 , \qquad -1\leq\Sigma_{\mathrm{YM}}\leq 1 ,\qquad 0\leq\Omega_\mathrm{m}\leq 1. 
 \end{equation}
Moreover, since $0<\gamma_{\mathrm{m}}\leq2$, it follows from \eqref{q-equation} that 
\begin{equation}
-1<q\leq 2\,.
\end{equation}
Hence, the right-hand side of \eqref{Dyn1} becomes unbounded only when $\tilde T\rightarrow+\infty$ ($H\rightarrow 0$). In order to obtain a global dynamical systems formulation on a compact state space, we further introduce
\begin{equation}\label{Ttilde}
T=\frac{\tilde{T}}{1+\tilde{T}}\,
\end{equation}
so that $T\rightarrow 0$ as $\tilde{T}\rightarrow 0$, and $T\rightarrow 1$ as $\tilde{T}\rightarrow +\infty$. 
We also introduce a new independent variable $\tau$ defined by
\begin{equation}
\frac{d\tau}{dt}=\frac{H}{1-T}.
\end{equation}
The $\tau$ variable is constructed such that it  interpolates between the two asymptotic regimes described by the different scales inherent to the model, i.e. the Hubble scale $H$, when $H\rightarrow+\infty$, and the scale associated with gauge-coupling constant $e$, when $H\rightarrow0$; see Refs.\cite{alhugg15} and \cite{alhetal15} for more details on this issue.  
This leads to a \emph{global} 3-dimensional dynamical system,
\begin{subequations}\label{DynSys2}
 \begin{align}
 \frac{dX_1}{d\tau} &= -\frac{1}{2}\left[(1-q)(1-T)X_1-2T\Sigma_{\mathrm{YM}}\right] \label{dyn1}\\
   \frac{d\Sigma_{\mathrm{YM}}}{d\tau} &=-\left[(1-q)(1-T)\Sigma_{\mathrm{YM}}+2TX^3_1\right] \label{dyn3}\\
  \frac{dT}{d\tau} &=\frac{1}{2}(1+q) T (1-T)^2,
 \end{align}
\end{subequations}
where the constraint~\eqref{constr} is used to globally solve for $\Omega_{\mathrm{m}}$ and $q$ is given by~\eqref{q-equation}. It is also useful to consider the auxiliary evolution equation for $\Omega_{\mathrm{YM}}:=\rho_\mathrm{YM}/(3H^2)=\Sigma^{2}_{\mathrm{YM}}+X^4_1$ (equivalently $\Omega_{\mathrm{m}}=1-\Omega_{\mathrm{YM}}$), which is given by 
\begin{equation}\label{OmegaYMEvEq}
\frac{d\Omega_{\mathrm{YM}}}{d\tau} = 3(1-T) (\gamma_{\mathrm{m}} -\frac{4}{3})\Omega_{\mathrm{YM}}(1-\Omega_{\mathrm{YM}}).
\end{equation}
The state space ${\bf S}$ is a 3-dimensional space consisting of a deformed solid cylinder of height $0<T<1$. The outer shell of the cylinder corresponds to the \emph{pure Yang-Mills} invariant  subset $\Omega_{\text m}=0$ ($\Omega_{\mathrm{YM}}=1$) which we denote by ${\bf S}_{\mathrm{YM}}$. The axis of the cylinder is a straight line with $\Omega_{\text m}=1$ ($\Omega_{\mathrm{YM}}=0$) and corresponds to the invariant subset associated with the (self-similar) flat Friedmann-Lema\^itre (FL) spacetime. 
The state space ${\bf S}$ can be analytically extended to include its closure, i.e., the invariant boundaries $T=0$ and $T=1$, and form the extended state space $\overline{{\bf S}}$, while the extension of ${\bf S}_{\mathrm{YM}}$ to $T=0$ and $T=1$ will be denoted by ${\bf \overline S}_{\mathrm{YM}}$. This extension is crucial since all attracting sets are located on these boundaries as shown by the following simple lemma:
\begin{lemma}
	\label{lemma1}
	The $\alpha$-limit set of all interior orbits in $\mathbf{S}$ is located at $T=0$, while the $\omega$-limit set of all interior orbits in $\mathbf{S}$ is located at $T=1$.
\end{lemma}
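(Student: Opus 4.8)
The plan is to exploit the fact that the height coordinate $T$ is strictly monotone along every interior orbit, which immediately restricts where the limit sets can lie. The crucial ingredient is the evolution equation for $T$, the third equation of~\eqref{DynSys2}, namely $dT/d\tau=\tfrac{1}{2}(1+q)T(1-T)^2$. Since the text establishes $-1<q\le 2$, the factor $1+q$ is \emph{strictly} positive throughout $\overline{\mathbf{S}}$, and in the interior one has $T(1-T)^2>0$; hence $dT/d\tau>0$ on all of $\mathbf{S}$. Thus $T$ is a strictly increasing function of $\tau$ along any interior orbit.

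Next I would use boundedness. Because $0<T<1$ on $\mathbf{S}$ and $T$ is strictly increasing, the limits $T_-:=\lim_{\tau\to-\infty}T(\tau)$ and $T_+:=\lim_{\tau\to+\infty}T(\tau)$ exist and satisfy $0\le T_-<T(\tau_0)<T_+\le 1$ for any fixed $\tau_0$. Since $\overline{\mathbf{S}}$ is compact, the $\alpha$- and $\omega$-limit sets of the orbit are nonempty, compact, connected and invariant. On each limit set $T$ is constant (equal to $T_-$, respectively $T_+$) by the monotone convergence above; invariance then forces $dT/d\tau\equiv 0$ there, and because $1+q>0$ this can only happen where $T(1-T)^2=0$, i.e.\ where $T\in\{0,1\}$. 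Therefore $T_-\in\{0,1\}$ and $T_+\in\{0,1\}$.

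It only remains to assign the correct value to each. Since $T_-<1$ we must have $T_-=0$, so the $\alpha$-limit set lies in the boundary $T=0$; since $T_+>0$ we must have $T_+=1$, so the $\omega$-limit set lies in $T=1$. This proves the lemma.

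I expect the only delicate point to be the step that pins the limiting value to $\{0,1\}$: one must invoke the invariance of the limit sets (together with strict positivity of $1+q$) rather than merely asserting that the derivative of a convergent function vanishes. This is precisely the monotonicity principle for dynamical systems (see, e.g.,~\cite{WainElis}), and its verification here is immediate once the sign of $1+q$ is settled. The rest of the argument is routine.
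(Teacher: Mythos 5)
Your proof is correct and follows essentially the same route as the paper: both arguments rest on the strict monotonicity of $T$ (from $1+q>0$) and the monotonicity principle, which the paper invokes by name and you unpack explicitly via compactness and invariance of the limit sets. No gaps.
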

\begin{proof}
Since $1+q>0$, then $T$ is strictly monotonically increasing in the interval $(0,1)$. By the monotonicity principle, it follows that there are no fixed points, recurrent or periodic orbits in the interior of the state space ${\bf S}$, and the $\alpha$ and $\omega$-limit sets of all orbits in ${\bf S}$ are contained at $T=0$ and $T=1$, respectively.
\end{proof}
We now give a detailed description of the invariant boundaries $T=0$, associated with the asymptotic past ($H\rightarrow+\infty$), and $T=1$, associated with the asymptotic future ($H\rightarrow 0$), as well as the pure massless Yang-Mills invariant subset $\mathbf{S}_{\mathrm{YM}}$ and the Friedmann-Lema\^itre invariant subset. 
\subsection*{A. The invariant boundary $T=0$}\label{T0}
The flow induced on the $T=0$ boundary is given by
\begin{subequations}
	\begin{align}
	\frac{dX_1}{d\tau} &= \frac{3}{4}\left(\gamma_{\mathrm{m}}-\frac{4}{3}\right)\Omega_{\text m} X_1,  \\
		\frac{d\Sigma_{\mathrm{YM}}}{d\tau} &=\frac{3}{2}\left(\gamma_{\text m}-\frac{4}{3}\right)\Omega_{\text m} \Sigma_{\mathrm{YM}} 	
	\end{align}
\end{subequations}
with the constraint $\Omega_{\text m} = 1-\Omega_{\mathrm{YM}} = 1- X^4_1-\Sigma^2_{\mathrm{YM}}$.
For $\gamma_{\mathrm{m}}=4/3$, this subset consists only of fixed points, forming the deformed disk:
\begin{equation*}
\mathrm{D_R}:\qquad 0\leq\Sigma^2_{\mathrm{YM}}+X^4_1\leq 1,\qquad {\text{for}}\qquad T=0.
\end{equation*}
For $\gamma_{\mathrm{m}}\neq 4/3$, the invariant subset $\Omega_{\mathrm{m}}=0$ consists of a deformed circle of fixed points given by
\begin{equation*}
\text{L}_{\text{R}}:\qquad \Sigma^2_{\mathrm{YM}}+X^4_1=1,\qquad {\text{for}}\qquad T=0, 
\end{equation*}
and there is one more isolated fixed point $\mathrm{FL_0}$ located at $\Omega_{\mathrm{YM}}=0$, i.e. 
\begin{equation*}
{\text{FL}}_0:\quad\Sigma_{\mathrm{YM}} = X_1 = 0,\qquad {\text{for}}\qquad T=0. 
\end{equation*}
At the invariant boundary $T=0$, the trajectories of the solutions are easily found by quadrature giving
\begin{equation}
\Sigma_{\mathrm{YM}} = C X^2_1,
\end{equation}
where $C$ is a real constant that parametrizes the solutions. This equation clearly shows that the flow is invariant under the transformation $(X_1,\Sigma_{\mathrm{YM}})\rightarrow (-X_1,\Sigma_{\mathrm{YM}})$. Moreover, since $\Omega_{\text m}>0$, a straightforward inspection of the flow, shows that, if $\gamma_{\text m}>\frac{4}{3}$ (resp. $\gamma_{\text m}<\frac{4}{3}$), then $\mathrm{L_R}$ is a  sink (resp. source) of a 1-parameter set of solutions with a single solution ending (resp. originating) from each fixed point and $\mathrm{FL_0}$ is a source (resp. sink) of a 1-parameter set of solutions, see Fig.~\ref{fig:T0_Massless}.
\begin{figure}[ht!]
	\begin{center}
			\subfigure[Invariant boundary $T=0$ for $\gamma_{\text m}=1<4/3$.]{\label{fig:Dust}
			\includegraphics[width=0.30\textwidth]{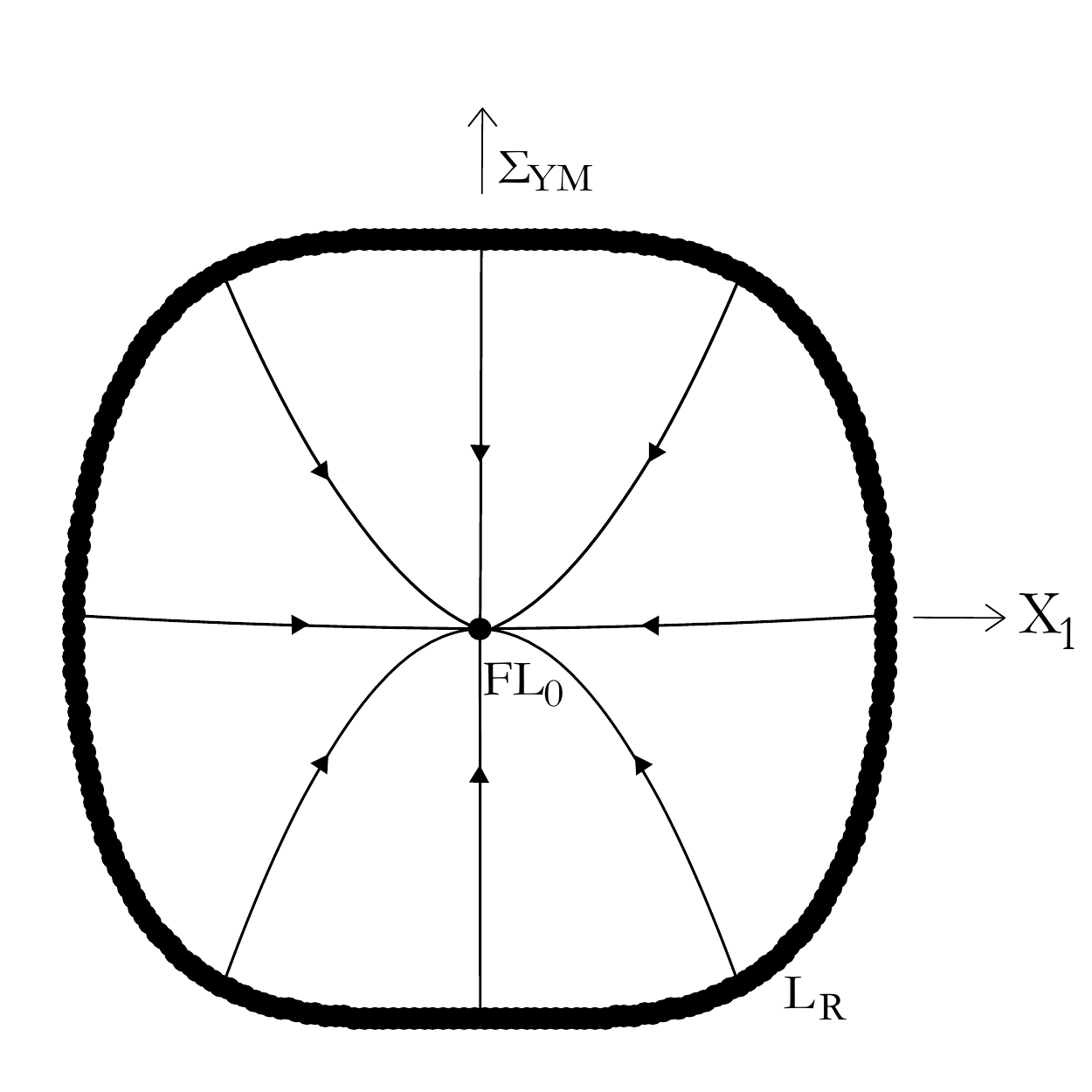}} \hspace{1cm}
		\subfigure[Invariant boundary $T=0$  for $\gamma_{\text m}=\frac{3}{2}>4/3$.]{\label{fig:Stiffer}
			\includegraphics[width=0.30\textwidth]{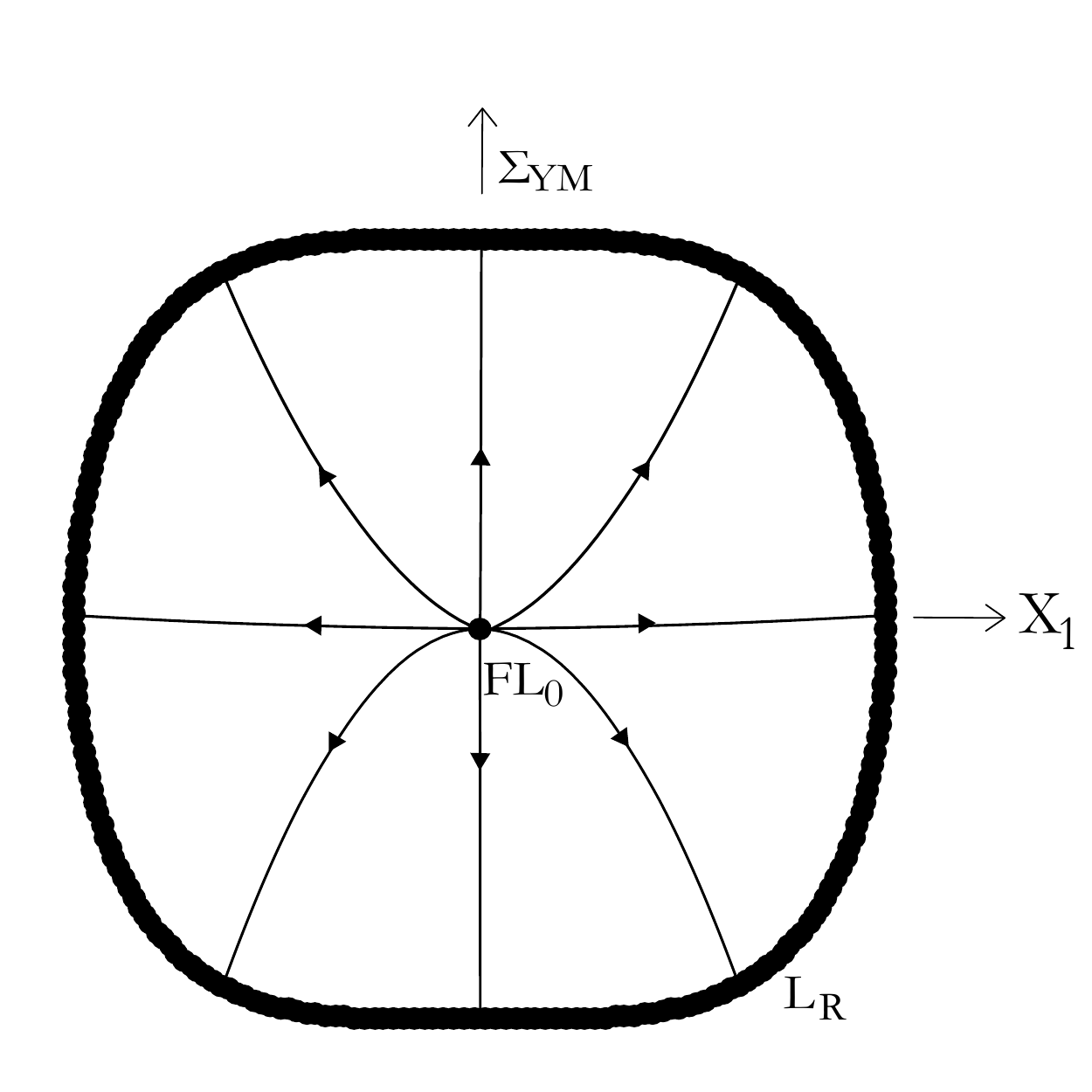}}
	\end{center}
	\vspace{-0.5cm}
	\caption{The invariant boundary $T=0$ of phase-space ${\bf S}$ for two different values of $\gamma_{\text m}$. The picture for $\gamma_{\text m}=4/3$ consists of a disk of fixed points.}
	\label{fig:T0_Massless}
\end{figure}
%
\subsection*{B. The invariant boundary $T=1$}
\label{T1}
On the $T=1$ invariant boundary, the system \eqref{dyn1} and \eqref{dyn3} reduces to
	\begin{subequations}
	\begin{align}
	\frac{dX_1}{d\tau} &= \Sigma_{\mathrm{YM}}, \\
	\frac{d\Sigma_{\mathrm{YM}}}{d\tau} &=-2X^3_1 ,
	\end{align}
\end{subequations}
which has a single fixed point: 
\begin{equation*}
{\text{FL}}_1:\quad\Sigma_{\mathrm{YM}} = X_1 = 0,\qquad {\text{for}}\qquad T=1. 
\end{equation*}
In this case, it also follows that $d\Omega_{\mathrm{YM}}/d\tau = 0$, implying
\begin{equation}
\Omega_{\mathrm{YM}}=C,
\end{equation}
where $C\in [0,1]$. The $T=1$ boundary is then foliated by a 1-parameter set of periodic orbits $\mathcal{P}_{\Omega_{\mathrm{YM}}}$ and, therefore, the fixed point $\mathrm{FL_1}$ (corresponding to $C=0$) is a center [see Figure~\ref{fig:T1Massless}]. Note that $C=1$ gives the outer periodic orbit $\mathcal{P}_{1}$ with $\Omega_{\mathrm{YM}}=1$ ($\Omega_{\text m}=0$).
\subsection*{C. The Friedmann-Lema\^itre invariant subset: $\mathrm{FL_0} \rightarrow \mathrm{FL_1}$}
The invariant subset $\Omega_{\text m}=1$  consists of a straight   heteroclinic orbit connecting the $\mathrm{FL}_0$ fixed point, located at the origin, to the 
fixed point $\mathrm{FL}_1$ located at $(X_1,\Sigma_{\mathrm{YM}},T)=(0,0,1)$. This orbit is associated with the flat Friedmann-Lema\^itre solution, where $T$ describes the evolution of $H$; see Figure~\ref{fig:Omega1}.
\begin{figure}[ht!]
	\begin{center}
		\subfigure[Invariant boundary $T=1$.]{\label{fig:T1Massless}
			\includegraphics[width=0.30\textwidth]{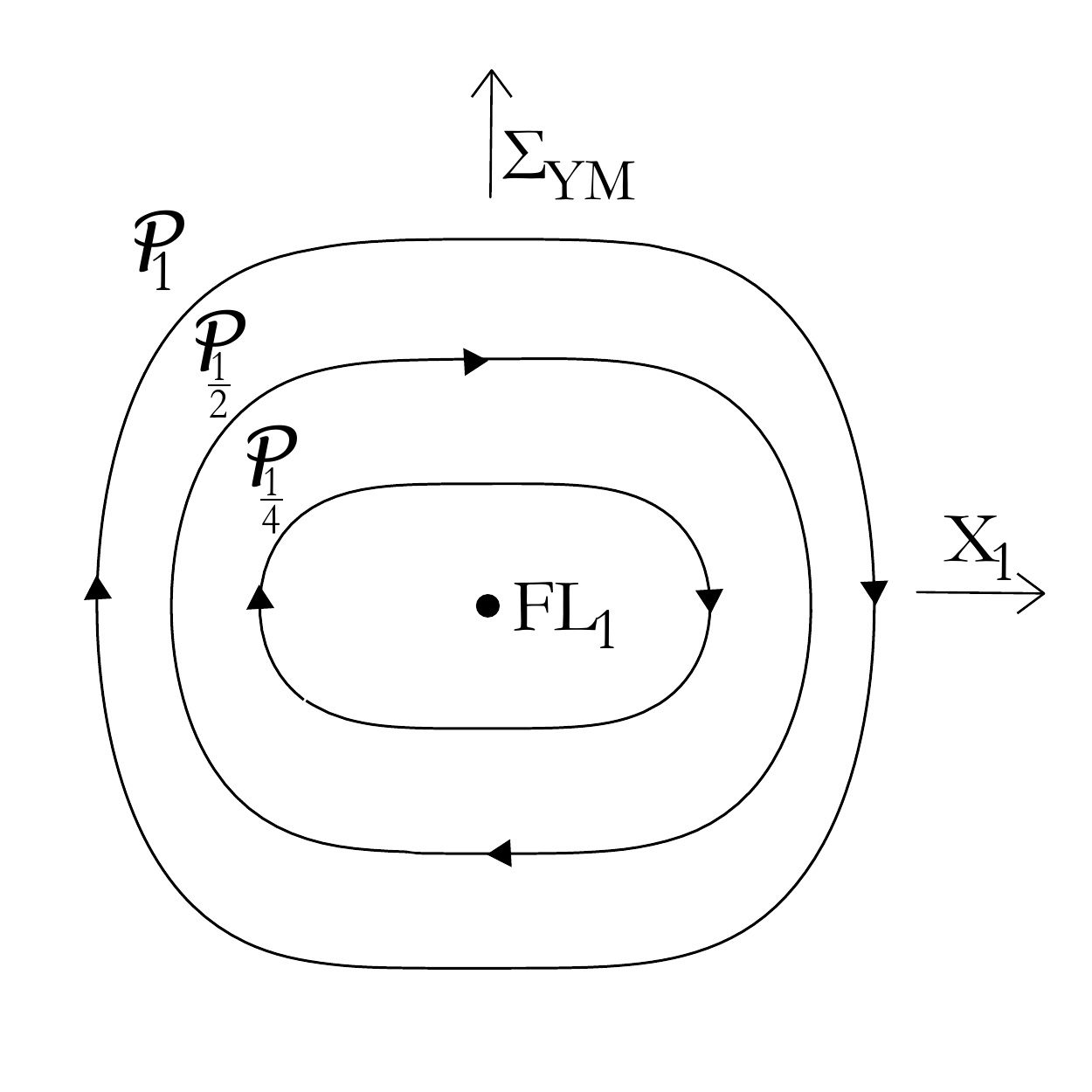}}
		\hspace{1cm}
		\subfigure[Invariant subset $\Omega_m=1$.]{\label{fig:Omega1}
			\includegraphics[width=0.30\textwidth]{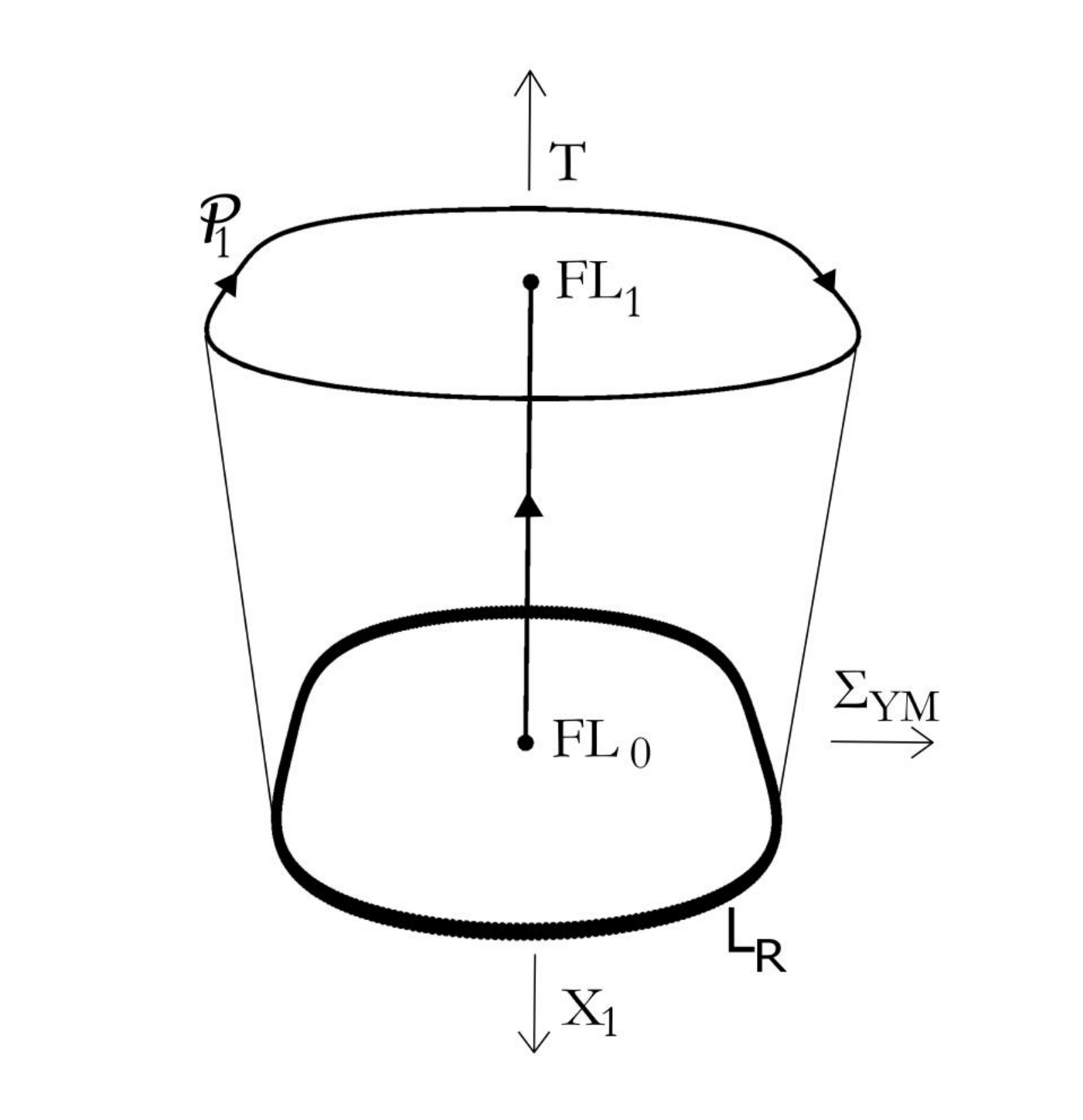}}
		\end{center}
		\vspace{-0.5cm}
		\caption{Representation of the invariant boundary $T=1$ and of the invariant subset $\Omega_{\text m}=1$.}
	\end{figure}
\subsection*{D. The pure massless Yang-Mills subset $\overline{{\bf S}}_{\mathrm{YM}}$}\label{TF}
On the invariant set $\Omega_{\text m}=0$, it follows that the deceleration parameter $q$ is constant, with $q=1$, and the dynamical system simplifies to
\begin{equation}
\frac{dX_1}{d\tau} = T\Sigma_{\mathrm{YM}} \,, \quad
  \frac{d\Sigma_{\mathrm{YM}}}{d\tau} =-2TX_1^3 \,, \quad 
\frac{dT}{d\tau} = T (1-T)^2, 
\end{equation}
subject to the constraint
\begin{equation}
 \Sigma^{2}_{\mathrm{YM}}+X_1^4=1.
\end{equation}
This constraint can be globally solved by introducing the angular variable $\theta$ as
\begin{equation}
X_1=\cos{\theta},\qquad  \Sigma_{\mathrm{YM}}=G(\theta)\sin{\theta}, 
\end{equation}
where
\begin{equation}
 G(\theta)=
 \sqrt{1+\cos^{2}{\theta}}.
\end{equation}
This leads to a 2-dimensional unconstrained dynamical system for the state vector $(\theta,T)$, given by
\begin{subequations}
\begin{align}
\label{theta1}
 \frac{d\theta}{d\tau} &=- TG(\theta)\\
 \label{theta2}
 \frac{dT}{d\tau} &=T(1-T)^2.
\end{align}
\end{subequations}
The intersection with the invariant boundary $T=0$, consists of the circle of fixed points $\mathrm{L}_{\mathrm{R}}$ whose linearisation yields  the eigenvalues $1$ and $0$, with the center manifold being the line itself, i.e., the circle of fixed points 
is normally hyperbolic, so that a unique solution originates from each fixed point $(\theta_0,0)$, $\theta_0\in[0,2\pi)$, and a one-parameter set of solutions (parameterized by $\theta_0$) originates from the circle into the interior of the state space $\mathbf{S}_{\mathrm{YM}}$. At $T=1$, it follows that
\begin{equation}
\frac{d\theta}{d\tau} = -G(\theta)<0,
\end{equation}
which corresponds to the periodic orbit $\mathcal{P}_1$. From the monotonicity of $T$, see Lemma~\ref{lemma1}, it follows that all solutions originate from the circle of fixed points at 
$T=0$ and end at the periodic orbit at $T=1$ which, therefore, constitutes a limit cycle. 
In fact, using \eqref{theta1} and \eqref{theta2}, we find that, in this case, the orbits are the solutions to the equation
\begin{equation}
 \frac{d\theta}{dT}=-\frac{G(\theta)}{(1-T)^2},
\end{equation}
and which are given by
\begin{equation}
\label{theta-eq}
\theta(T)=F\left(\sqrt{2}\left(\frac{1}{1-T_0}-\frac{1}{1-T}\right)\Big |\frac{1}{\sqrt{2}}\right),
\end{equation}
where $F(x|k)$ is the Jacobi elliptic amplitude, satisfying $F(0|k)=0$. This 1-parameter set of solutions parametrized by $T_0$,  corresponds to the well-known solutions for the pure massless Yang-Mills field in a flat Robertson-Walker geometry found in Ref.\cite{bentoetal93} and \cite{GV91} by solving $d^2\chi/d\eta^2=-\chi^3/2$, where $\eta$ is the conformal time  $d\eta=dt/a(t)$. These solutions are depicted in Fig. \ref{fig:MasslessSpace} for different initial conditions.
\begin{figure}[h!]
\begin{center}
\subfigure[Dynamics on the invariant boundary $\overline{{\bf S}}_{\mathrm{YM}}$.]{\label{fig:Massless_Cyl}
\includegraphics[width=0.30\textwidth]{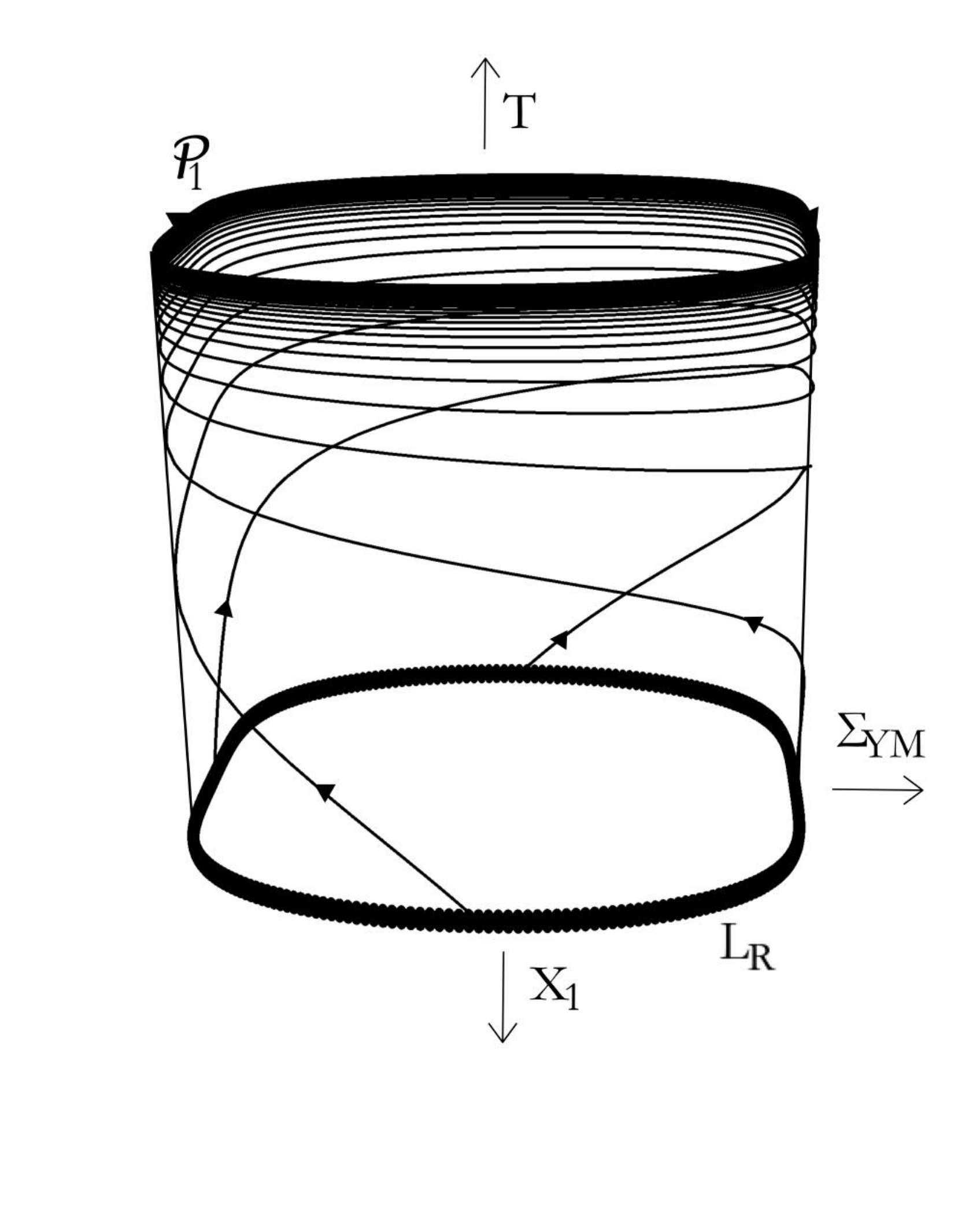}}
\hspace{1cm}
\subfigure[`Unwrapped' solution space, corresponding to solutions \eqref{theta-eq} for different values of $\theta_0$. ]{\label{fig:MasslessUnrapped}
\includegraphics[width=0.30\textwidth]{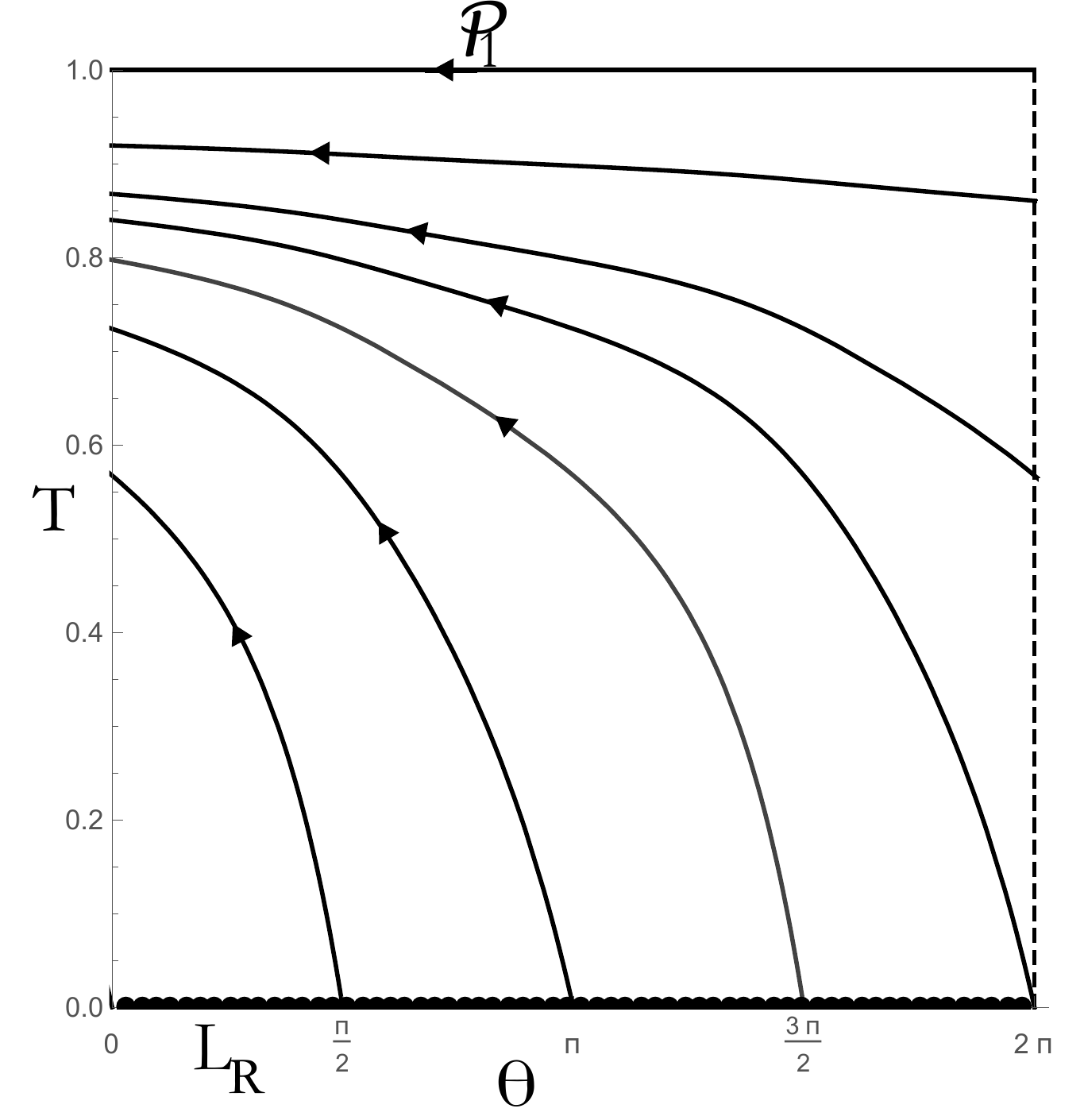}}
\end{center}
\vspace{-0.5cm}
\caption{Dynamics on the invariant set $\overline{{\bf S}}_{\mathrm{YM}}$.}
\label{fig:MasslessSpace}
\end{figure}
\subsection*{E. Global dynamics for massless Yang-Mills field and perfect fluid}
We now make use of the previous analysis to prove the following result:
\begin{proposition} Consider solutions of the system \eqref{DynSys2} with $0<\Omega_{\mathrm m}<1$:
\label{prop1}
\begin{itemize}
  \item[(i)] If $\gamma_{\mathrm m} > \frac{4}{3}$, then all solutions  converge, for $\tau\rightarrow-\infty$, to the fixed point $\mathrm{FL_0}$ with $\Omega_\mathrm{m}=1$ and, for $\tau\rightarrow+\infty$, to the outer periodic orbit $\mathcal{P}_1$ with $\Omega_{\mathrm m}=0$.
  \item[(ii)] If $0<\gamma_{\mathrm m} < \frac{4}{3}$, a 1-parameter set of solutions converges, for $\tau\rightarrow-\infty$, to each point on the circle of fixed point $\mathrm{L_R}$ with $\Omega_{\mathrm{m}}=0$, while all solutions converge, for $\tau\rightarrow+\infty$, to the fixed point $\mathrm{FL_1}$ with $\Omega_{\mathrm{m}}=1$.
  \item[(iii)] If $\gamma_{\mathrm m}= \frac{4}{3}$, a unique solution converges, for $\tau\rightarrow-\infty$, to each point on  the disk of fixed points $\mathrm{D_R}$, while a 1-parameter set of solutions converges, for $\tau\rightarrow+\infty$, to each inner periodic orbit $\mathcal{P}_{\Omega_{\mathrm{YM}}}$.
\end{itemize}
\end{proposition}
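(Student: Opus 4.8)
The plan is to combine the two monotone quantities already available: the cylinder height $T$, shown in Lemma~\ref{lemma1} to increase strictly so that every interior orbit has its $\alpha$-limit in $T=0$ and its $\omega$-limit in $T=1$, and the expansion-normalized Yang-Mills density $\Omega_{\mathrm{YM}}=\Sigma_{\mathrm{YM}}^2+X_1^4$, whose evolution~\eqref{OmegaYMEvEq} has sign governed entirely by $\mathrm{sgn}(\gamma_{\mathrm m}-\tfrac43)$ in the interior (where $0<T<1$ and $0<\Omega_{\mathrm{YM}}<1$). Thus for $\gamma_{\mathrm m}>\tfrac43$ (resp. $<\tfrac43$) the quantity $\Omega_{\mathrm{YM}}$ is strictly increasing (resp. decreasing), while for $\gamma_{\mathrm m}=\tfrac43$ it is conserved. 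Being monotone and bounded, $\Omega_{\mathrm{YM}}$ has limits $\Omega_{\mathrm{YM}}^{\mp}$ as $\tau\to\mp\infty$, and the whole problem reduces to identifying these limits and matching the resulting boundary point/curve against the fixed-point and periodic-orbit structure of subsections A--D.

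For the cases $\gamma_{\mathrm m}\neq\tfrac43$ I would first pin $\Omega_{\mathrm{YM}}$ on $T=0$. The $\alpha$-limit set lies in $T=0$, is invariant, and carries the constant value $\Omega_{\mathrm{YM}}^{-}$; evaluating~\eqref{OmegaYMEvEq} there forces $\Omega_{\mathrm{YM}}^{-}(1-\Omega_{\mathrm{YM}}^{-})=0$, so $\Omega_{\mathrm{YM}}^{-}\in\{0,1\}$, and strict monotonicity selects $\Omega_{\mathrm{YM}}^{-}=0$ for $\gamma_{\mathrm m}>\tfrac43$ (the point $\mathrm{FL_0}$) and $\Omega_{\mathrm{YM}}^{-}=1$ for $\gamma_{\mathrm m}<\tfrac43$ (the circle $\mathrm{L_R}$). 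The analogue on $T=1$ is more delicate because there the right-hand side of~\eqref{OmegaYMEvEq} vanishes identically and the invariance argument is vacuous. Here the plan is to parametrize the orbit by $T$ (legitimate since $T$ is strictly monotone) and study
\begin{equation*}
\frac{d\Omega_{\mathrm{YM}}}{dT}=\frac{6\left(\gamma_{\mathrm m}-\tfrac43\right)\Omega_{\mathrm{YM}}(1-\Omega_{\mathrm{YM}})}{(1+q)\,T\,(1-T)};
\end{equation*}
if $\Omega_{\mathrm{YM}}$ approached an interior value $C_0\in(0,1)$ as $T\to1$, the numerator would stay bounded away from $0$ while $(1-T)^{-1}$ blows up, and integrating the resulting $\sim K/(1-T)$ growth would drive $\Omega_{\mathrm{YM}}$ out of $[0,1]$, a contradiction. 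Hence $\Omega_{\mathrm{YM}}^{+}=1$ for $\gamma_{\mathrm m}>\tfrac43$ (the outer periodic orbit $\mathcal{P}_1$) and $\Omega_{\mathrm{YM}}^{+}=0$ for $\gamma_{\mathrm m}<\tfrac43$ (the center $\mathrm{FL_1}$).

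It then remains to upgrade ``the limit set lies in a given curve/disk of equilibria'' to convergence to a single point with the stated multiplicities. For $\gamma_{\mathrm m}<\tfrac43$ I would linearize the full system at a point of $\mathrm{L_R}$: the eigenvalues are $+1$ in the $T$-direction, $3(\tfrac43-\gamma_{\mathrm m})>0$ transverse to the circle within $T=0$ (obtained by linearizing~\eqref{OmegaYMEvEq} at $T=0$, $\Omega_{\mathrm{YM}}=1$), and $0$ tangent to the circle, so $\mathrm{L_R}$ is a normally hyperbolic repeller whose two-dimensional unstable manifold is foliated by the strong unstable fibres over its points; each fibre is a one-parameter family of orbits $\alpha$-converging to a single point, giving~(ii). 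The case $\gamma_{\mathrm m}=\tfrac43$ is one dimension lower: since $q\equiv1$ the $(X_1,\Sigma_{\mathrm{YM}},T)$-subsystem coincides with the pure Yang-Mills system of subsection~D on every level $\Omega_{\mathrm{YM}}=C$, each disk point has eigenvalue $+1$ in $T$ and a double zero tangent to $\mathrm{D_R}$, hence a one-dimensional unstable manifold carrying a unique orbit, while the one-parameter curve $\{X_1^4+\Sigma_{\mathrm{YM}}^2=C\}$ feeds the periodic orbit $\mathcal{P}_C$, giving~(iii); case~(i) needs no such refinement, since $\mathrm{FL_0}$ is already a single point and the $\omega$-limit, being a connected invariant subset of the fixed-point-free orbit $\mathcal{P}_1$, must be all of $\mathcal{P}_1$.

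The main obstacle is the behaviour at the future boundary $T=1$: monotonicity of $\Omega_{\mathrm{YM}}$ together with Lemma~\ref{lemma1} locates the $\omega$-limit on $T=1$ but does not fix its value, precisely because~\eqref{OmegaYMEvEq} degenerates there, so one cannot conclude $\Omega_{\mathrm{YM}}^{+}\in\{0,1\}$ by the invariance reasoning that works at $T=0$. The blow-up estimate for $d\Omega_{\mathrm{YM}}/dT$ is what excludes an orbit stalling at an intermediate density, and some care is needed to check that $1+q$ stays bounded away from $0$ and $\infty$ along the orbit (which follows from $-1<q\le2$) so that the singular $(1-T)^{-1}$ factor genuinely dominates.
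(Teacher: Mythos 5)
Your proof is correct, and its skeleton --- Lemma~\ref{lemma1} plus the auxiliary equation~\eqref{OmegaYMEvEq} to pin down the asymptotic values of $\Omega_{\mathrm{YM}}$, followed by linearization at the $T=0$ equilibria for the multiplicity counts --- is the same as the paper's. The one genuine difference is how the limiting values of $\Omega_{\mathrm{YM}}$ are identified. The paper exploits the fact that, since $\gamma_{\mathrm{YM}}=4/3$ is constant, equation~\eqref{OmegaYMEvEq} together with the $T$-equation is separable and integrates explicitly to the first integral $\left(\Omega_{\mathrm{YM}}^{3\gamma_{\mathrm m}/4}/(1-\Omega_{\mathrm{YM}})\right)^{1/(3\gamma_{\mathrm m}-4)}=C\,T/(1-T)$, from which the $T\rightarrow0$ and $T\rightarrow1$ limits (and the constancy of $\Omega_{\mathrm{YM}}$ when $\gamma_{\mathrm m}=4/3$) are read off at once. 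You instead argue qualitatively: strict monotonicity of $\Omega_{\mathrm{YM}}$, invariance of the $\alpha$-limit set to force $\Omega_{\mathrm{YM}}^{-}\in\{0,1\}$ at $T=0$, and the blow-up estimate for $d\Omega_{\mathrm{YM}}/dT$ to exclude an interior limit at $T=1$, where, as you correctly note, the invariance argument degenerates because of the overall $(1-T)$ factor; the needed lower bound on $1+q$ is legitimate since $q$ is continuous with $q>-1$ on the compact state space $\overline{\bf S}$, so $1+q$ is uniformly bounded away from zero. Your route is slightly longer but more robust: it does not rely on explicit integrability (a special feature of the massless case), and it is closer in spirit to the qualitative and averaging techniques the paper needs in Section~\ref{massive}, where no such first integral exists. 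What the paper's computation buys is more than the limit sets: the explicit $(T,\Omega_{\mathrm{YM}})$-trajectories give the complete orbit structure in one line. Your treatment of the multiplicities --- normal hyperbolicity of $\mathrm{L_R}$ with its strong unstable foliation, the center-manifold count at the non-hyperbolic points of $\mathrm{D_R}$ (eigenvalues $0,0,1$), and the identification of the $\omega$-limit with all of $\mathcal{P}_1$ or $\mathcal{P}_{\Omega_{\mathrm{YM}}}$ by invariance, compactness and connectedness --- matches the paper's eigenvalue analysis and is, if anything, slightly more carefully justified.
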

This means that in case $\gamma_{\mathrm{m}}>\frac{4}{3}$ (resp. $\gamma_{\mathrm{m}}<\frac{4}{3}$), the model is past (resp. future) asymptotic fluid dominated and future (resp. past) asymptotic Yang-Mills field dominated. In the critical case, $\gamma_{\mathrm{m}}=\frac{4}{3}$, the model in neither fluid nor Yang-Mills dominated towards the asymptotic past nor the asymptotic future, see Figure~\ref{fig:MasslessMatterSpace} for representative solutions.

\begin{proof}
	The proof makes use of Lemma~\ref{lemma1} and the simple orbit structure on the invariant boundaries, given in the previous subsections \ref{T0}A-\ref{TF}D , which imply that the only possible $\alpha$-limit sets are fixed points on $T=0$, while the $\omega$-limit sets can be either periodic orbits or the fixed point $\mathrm{FL}_{1}$ on $T=1$. 
	
	In order to prove the general asymptotic behavior, we make use of the auxiliary Eq.~\eqref{OmegaYMEvEq} for $\Omega_{\mathrm{YM}}$. Since $\gamma_{\mathrm{YM}}=4/3$ is constant, then Eq. \eqref{OmegaYMEvEq}, together with the evolution equation for $T$, can be easily solved for $\Omega_{\mathrm{YM}}$ in terms of $T$. For solutions with $0<\Omega_{\mathrm{YM}}<1$, and $\gamma_{\mathrm{m} }\ne \frac{4}{3}$, we get
	\begin{equation*}
	\left(\frac{\Omega^{\frac{3\gamma_{\text m}}{4}}_{\mathrm{YM}}}{1-\Omega_{\mathrm{YM}}}\right)^{\frac{1}{3\gamma_{\text m}-4}}=C\frac{T}{1-T},
	\end{equation*}
	where $C> 0$ is a real constant parametrizing the solutions. 
	The last equation clearly shows that if $\gamma_{\text m} >\frac{4}{3}$, then $\Omega_{\mathrm{YM}}\rightarrow 0$ as $T\rightarrow0$, and $\Omega_{\mathrm{YM}}\rightarrow 1$ as $T\rightarrow1$, i.e. all solutions with $0<\Omega_{\mathrm{YM}}<1$ start at $\mathrm{FL}_0$ and end at $\mathcal{P}_1$. In turn, if $\gamma_{\text m} <\frac{4}{3}$, then $\Omega_{\mathrm{YM}}\rightarrow 1$ as $T\rightarrow0$, and  $\Omega_{\mathrm{YM}}\rightarrow 0$ as $T\rightarrow 1$, i.e. all solutions start at $\mathrm{L}_\mathrm{R}$ and end at $\mathrm{FL}_1$. If $\gamma_{\text m}=\frac{4}{3}$, then $\Omega_{\mathrm{YM}}=C$, with $C\in (0,1)$ for all $T$, i.e. the solutions start at $\mathrm{D}_{\mathrm{R}}$ and end at $\mathcal{P}_{\Omega_{\mathrm{YM}}}$. 
	
	Now, we give a more precise description of the flow near the invariant boundaries $T=0$ and $T=1$. The linearisation of the system~\eqref{DynSys2} around the fixed points located at $T=0$ yields:
	\begin{itemize}
		\item {$\mathrm{FL}_0$}: eigenvalues   $\frac{3}{4}\left(\gamma_{\mathrm{m}}-\frac{4}{3}\right)$, $\frac{3}{2}\left(\gamma_{\mathrm{m}}-\frac{4}{3}\right)$ and $\frac{3}{4}\gamma_{\mathrm{m}}$, with associated eigenvectors $(1,0,0)$, $(0,1,0)$ and $(0,0,1)$. 
	\end{itemize}
	\begin{itemize}
		\item {$\mathrm{L}_\mathrm{R}$}: eigenvalues $0$, $-3\left(\gamma_{\mathrm{m}}-\frac{4}{3}\right)$ and $1$, with associated eigenvectors $(\Sigma_{\mathrm{YM}},-2X_{1}^3,0)$, $(X_1,2\Sigma_{\mathrm{YM}},0)$ and $(0,0,1)$ where $\Sigma^2_{\mathrm{YM}}+X^4_1=1$.
	\end{itemize}
	\begin{itemize}
		\item {$\mathrm{D}_\mathrm{R}$}: eigenvalues $0$, $0$ and $1$, with eigenvectors $(1,0,0)$, $(0,1,0)$ and $(0,0,1)$. 
	\end{itemize}
	For all interior orbits in $\bar{\mathbf{S}}$: When $\gamma_{\mathrm{m}}>4/3$, $\mathrm{FL}_0$ is a source of a 2-parameter set of orbits and, from  $\mathrm{L}_\mathrm{R}$, originates a 1-parameter set of orbits lying on $\mathbf{S}_{\mathrm{YM}}$. All these solutions end up at $\mathcal{P}_1$, except the heteroclinic orbit $\mathrm{FL}_0\rightarrow\mathrm{FL}_1$. 
	When $\gamma_{\mathrm{m}}<4/3$, only this heteroclinic orbit originates from $\mathrm{FL}_0$, while each point on $\mathrm{L}_\mathrm{R}$ has a center manifold ($\mathrm{L}_{\mathrm{R}}$ itself) and a two dimensional unstable manifold, being the source of a 1-parameter set of interior orbits (a 2-parameter set from the whole circle $\mathrm{L}_\mathrm{R}$). In this case, all solutions end at $\mathrm{FL}_1$ except the ones on $\mathbf{S}_{\mathrm{YM}}$ which end at $\mathcal{P}_1$. 
	If $\gamma_\mathrm{m}=4/3$, each fixed point on the disk $\mathrm{D}_\mathrm{R}$ is the source of a unique interior orbit. Since $\Omega_\mathrm{YM}=\text{const.}$, each periodic orbit $\mathcal{P}_{\Omega_{\mathrm{YM}}}$, at $T=1$, attracts a 1-parameter set of interior orbits, i.e. those solutions which originate from the circle of fixed points on the intersection of $\mathrm{D}_{\mathrm{R}}$ with $\Omega_\mathrm{YM}=\text{const.}$.
\end{proof}
\begin{figure}[h!]
\begin{center}
\subfigure[Solution space for $\gamma_{\text m}=1$.]{\label{fig:Dust}
\includegraphics[width=0.30\textwidth]{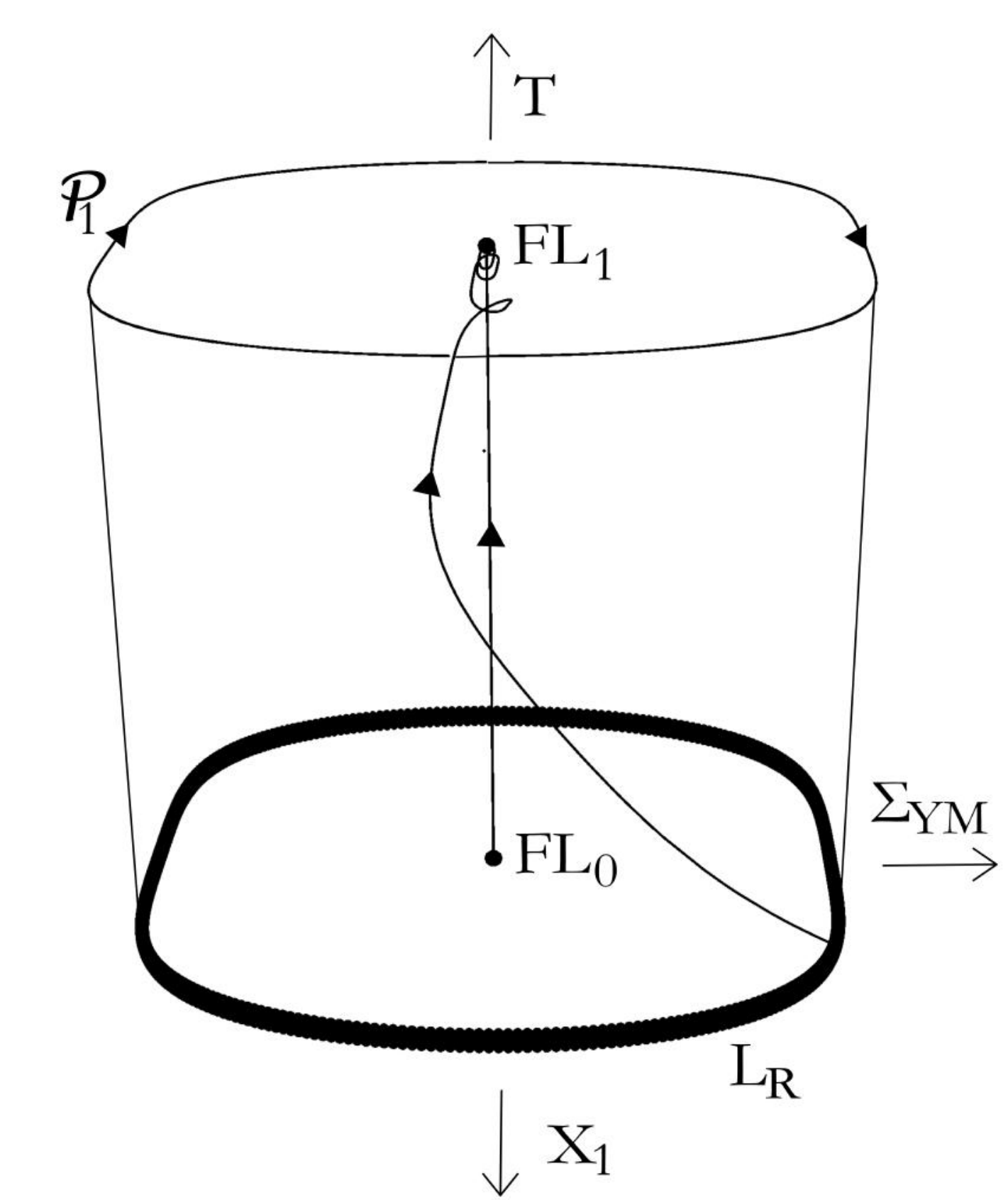}}
\subfigure[Solution space for $\gamma_{\text m}=\frac{4}{3}$.]{\label{fig:Radiation}
\includegraphics[width=0.30\textwidth]{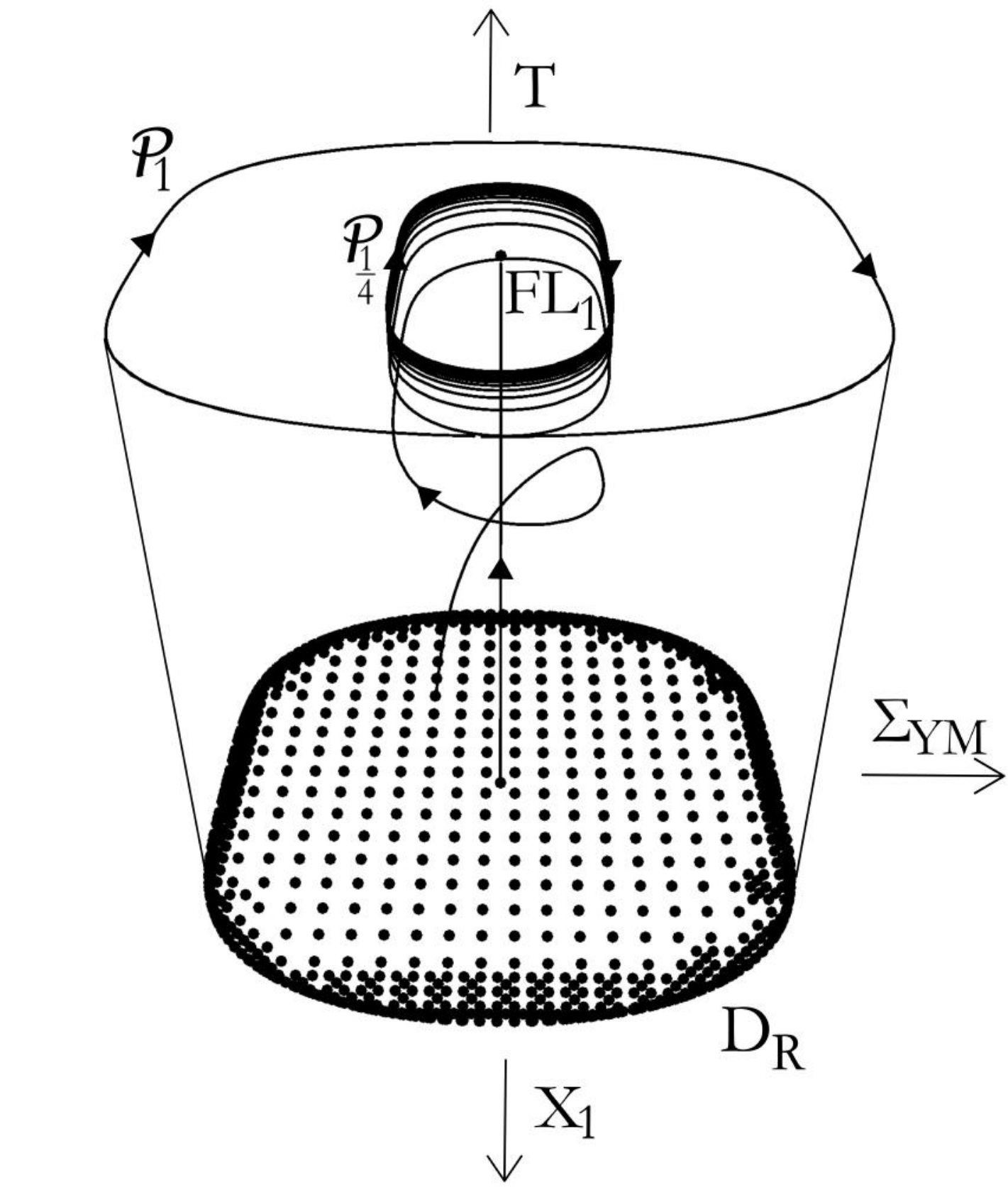}}
\subfigure[Solution space for $\gamma_{\text m}=\frac{3}{2}$.]{\label{fig:Stiffer}
\includegraphics[width=0.30\textwidth]{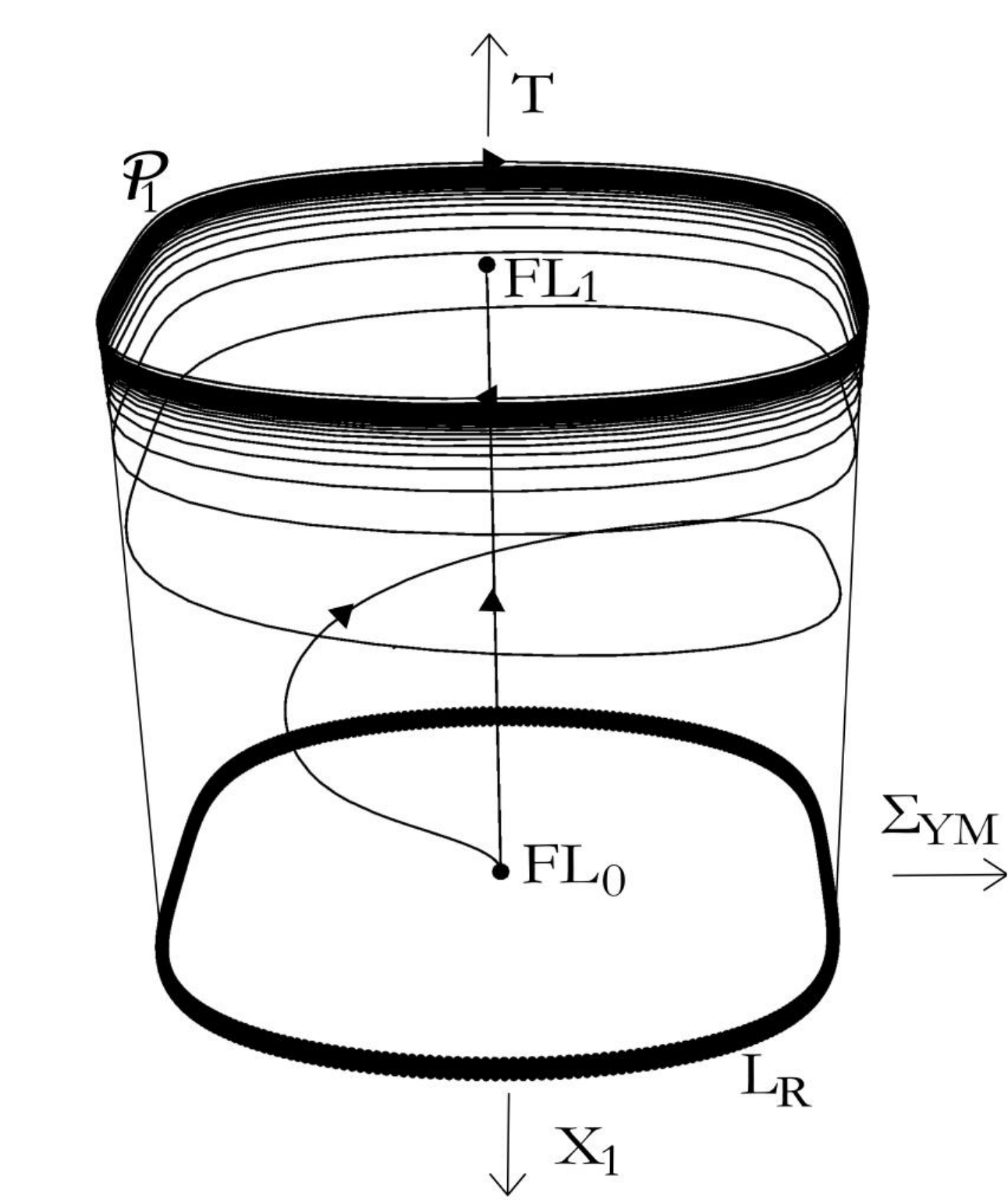}}
\end{center}
\vspace{-0.5cm}
\caption{Qualitative global evolution of dynamical system \eqref{DynSys2} in ${\overline{\bf S}}$ for the three different cases $\gamma_{\text m}<\frac{4}{3}$, $\gamma_{\text m}=\frac{4}{3}$ and $\gamma_{\text m}>\frac{4}{3}$, illustrating the results of Proposition \ref{prop1}.}
\label{fig:MasslessMatterSpace}
\end{figure} 
\section{Massive Yang-Mills field (case $\mu\ne 0$)}
\label{massive}
In this section, we analyze the system \eqref{eqspsi1} and \eqref{eqspsi-last}, with $\mu\ne 0$. We, therefore, introduce a new dimensionless variable associated with the mass parameter $\mu$,
\begin{equation}
X_2=\frac{\mu\phi}{\sqrt{2}H}.
\end{equation}
Using $e$-fold time $N$ as defined in~\eqref{e-folds}, we obtain the \emph{local} dynamical system
\begin{subequations} \label{prelim1}
 \begin{align}
  \frac{d\Sigma_{\mathrm{YM}}}{dN} 
   &=-\left[(1-q)\Sigma_{\mathrm{YM}}+2\tilde{T}X^3_1+\mu\tilde{T}^2 X_2\right] \\
   \frac{dX_1}{dN} &= -\frac{1}{2}\left[(1-q)X_1-2\tilde{T}\Sigma_{\mathrm{YM}}\right]\\
   \frac{dX_2}{dN} &=qX_2+\mu\tilde{T}^2\Sigma_{\mathrm{YM}} \\
  \frac{d\tilde{T}}{dN} &=\frac{1}{2}(1+q)\tilde{T},
 \end{align}
\end{subequations}
subject to the constraint
\begin{equation}\label{prelim-end}
X_2 = \mu\tilde{T} X_1, 
\end{equation}
and where we use
\begin{equation}\label{constrmass1}
1 - \Omega_{\text m} =\Sigma^{2}_{\mathrm{YM}}+X^4_1+X^2_2
\end{equation}
to solve for $\Omega_\mathrm{m}$. The deceleration parameter $q$ is given by
\begin{equation}
\label{q-massive-qwe}
 q=1-X^2_2+\frac{3}{2}\left(\gamma_{\text m}-\frac{4}{3}\right)\Omega_{\text m}.
\end{equation}
As in the massless case, the constraint~\eqref{constrmass1} implies that $X_1$, $\Sigma_{\mathrm{YM}}$, $\Omega_\mathrm{m}$, and $X_2$ are bounded. In particular, the bounds in~\eqref{BoundVar} hold and, in addition,
\begin{equation}
-1\leq X_2 \leq 1,
\end{equation}
which, given $0<\gamma_{\mathrm{m}}\leq 2$ and \eqref{q-massive-qwe}, yields
\begin{equation}
-1<q\leq 2 \, .
\end{equation}
Since the constraint~\eqref{prelim-end} is linear in $X_2$, it  can be used to solve for $X_2$ giving a \emph{local} 3-dimensional dynamical system for $(X_1,\Sigma_{\mathrm{YM}},\tilde{T})$, which is particularly useful for analyzing the asymptotics when  $H\rightarrow+\infty$ ($\tilde{T}\rightarrow0$), where $X_2\rightarrow0$. 
One could, as well, construct a local dynamical systems formulation appropriated to study the dynamics when $\tilde{T}$ becomes unbounded, i.e. $H\rightarrow 0$. This can be achieved by replacing $\tilde{T}$ with $\bar{T}=\tilde{T}^{-1}$, together with a new time variable $\tilde{N}$ defined via $d/d\tilde{N}=\bar{T}^3d/dN$ and where, now, the constraint becomes linear in $X_1 =\mu^{-1}\bar{T}X_2$ and, hence, can be solved for $X_1$ to obtain a {\em local} dynamical system for $(X_2,\Sigma_{\mathrm{YM}},\bar{T})$, with $X_1\rightarrow0$ as $\bar{T}\rightarrow0$. 

To obtain a {\em global} dynamical systems formulation on a compact state space, we proceed as in the massless case, and introduce the bounded variable
\begin{equation}
 T=\frac{\tilde{T}}{1+\tilde{T}}~,
\end{equation}
which satisfies $0<T<1$. By introducing a new independent variable $\bar \tau$, such that 
\begin{equation}
 \frac{d}{d\bar{\tau}}=(1-T)^2\frac{d}{dN},
\end{equation}
we obtain, from \eqref{prelim1} and \eqref{prelim-end}, a \emph{global} dynamical system
\begin{subequations} 
\label{dyn1-massive}
 \begin{align}
 \label{dyn-massive1}
 \frac{d\Sigma_{\mathrm{YM}}}{d\bar{\tau}} &=-\left[(1-q)(1-T)^2 \Sigma_{\mathrm{YM}}+2(1-T)T X^3_1+\mu T^2 X_2\right] \\
  \frac{dX_1}{d\bar{\tau}} &= -\frac{1}{2}\left[(1-q)(1-T)^2 X_1-2(1-T)T\Sigma_{\mathrm{YM}}\right]\\
  \frac{dX_2}{d\bar{\tau}} &=q(1-T)^2 X_2+\mu T^2\Sigma_{\mathrm{YM}} \\
  \label{T-evolution-massive}
  \frac{dT}{d\bar{\tau}} &=\frac{1}{2}(1+q)(1-T)^3 T,
 \end{align}
\end{subequations}
subject to the constraint
\begin{equation}\label{massconstr}
(1-T) X_2 = \mu T X_1,
\end{equation}
and where we use~\eqref{constrmass1} to globally solve for $\Omega_{\mathrm{m}}$. The deceleration parameter $q$ is, then, given by
\begin{equation}
\label{q-massive}
q=1-X^2_2+\frac{3}{2}\left(\gamma_{\text m}-\frac{4}{3}\right) (1 - \Sigma^{2}_{\mathrm{YM}}-X^4_1-X^2_2)  \\.
\end{equation}
It is also useful to consider the \emph{auxiliary} evolution equation for the effective energy density of the Yang-Mills field which, in the present case, reads
\begin{equation}
\Omega_{\mathrm{YM}} :=\frac{\rho_{\mathrm{YM}}}{3H^2} = \Sigma^{2}_{\mathrm{YM}}+X^4_1+X^2_2,
\end{equation}
with $\Omega_{\mathrm{YM}} =1-\Omega_{\text m}$. From \eqref{rhoYM} and \eqref{gammafunction}, we can write
\begin{equation}
\gamma_{\mathrm{YM}}:= 1+\frac{p_{\mathrm{YM}}}{\rho_{\mathrm{YM}}} =1+\frac{1}{3}\frac{\Sigma^{2}_{\mathrm{YM}}+X^{4}_{1}-X^{2}_{2}}{\Omega_{\mathrm{YM}}}
=\frac{4}{3}-\frac{2}{3}\frac{X^{2}_{2}}{\Omega_{\mathrm{YM}}}.
\end{equation}
Furthermore, rewriting \eqref{q-massive} as
\begin{equation}
q = -1+\frac{3}{2}\left(\gamma_{\mathrm{YM}}\Omega_{\mathrm{YM}}+\gamma_{\text m}\Omega_{\text m}\right),
\end{equation}
we obtain
\begin{equation}
\frac{d\Omega_{\mathrm{YM}}}{d\bar{\tau}}=3(1-T)^2(\gamma_{\text m}\Omega_{\mathrm{YM}}-\gamma_{\mathrm{YM}}\Omega_{\mathrm{YM}})(1-\Omega_{\mathrm{YM}}).
\end{equation}
The price to pay, in order to have a global relatively compact state space picture, is that the constraint~\eqref{massconstr} cannot be solved globally. However, it forms an invariant set for the flow. This can be seen by writing $G(X_1,X_2,T)=(1-T)X_2-\mu TX_1=0$ and noticing that
\begin{equation}
\frac{dG}{d\bar{\tau}}=\left(q-\frac{1}{2}(1+q)T\right)(1-T)^2 G\, .
\end{equation}
The state-space ${\bf S}$ for the variables $(\Sigma_{\mathrm{YM}}, X_1, X_2,  T)$ is, therefore, the subset defined by $G=0$ on the set $\{0\leq \Sigma^2_{\mathrm{YM}}+X^4_1+X^2_2\leq1 \wedge 0<T<1 \}$.
The state-space ${\bf S}$ contains other important invariant subsets: the pure Yang-Mills subset $\Omega_{\mathrm{m}}=0$ and the Friedmann-Lema\^itre invariant subset for which $\Omega_{\mathrm{m}}=1$. In addition, it can be regularly extended to include the invariant boundaries $T=0$ and $T=1$ to obtain the compact state-space $\overline {\bf S}$. 

As a starting point for our analysis, we study the past and future limit sets: 
\begin{lemma} \label{lemma-massive}Consider the system \eqref{dyn1-massive}-\eqref{massconstr}.
	The $\alpha$-limit set of all interior orbits in $\mathbf{S}$ is located at the invariant boundary $T=0$, with $X_2=0$, and 
	the $\omega$-limit set is at the invariant boundary $T=1$, with $X_1=0$.
\end{lemma}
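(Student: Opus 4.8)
The plan is to reproduce the strategy of Lemma~\ref{lemma1} and then read off the extra information about $X_1$ and $X_2$ directly from the constraint. First I would inspect the evolution equation~\eqref{T-evolution-massive} for $T$. Since $-1<q\leq2$ throughout $\mathbf{S}$, one has $1+q>0$, and in the interior $0<T<1$, so that $(1-T)^3>0$ and $T>0$; hence $dT/d\bar\tau>0$ along every interior orbit, i.e.\ $T$ is strictly monotonically increasing. This is the exact analogue of the observation driving Lemma~\ref{lemma1}, so no new idea is needed at this stage.

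With this monotone function in hand, I would invoke the monotonicity principle as in the proof of Lemma~\ref{lemma1}: a function that is strictly monotone along the flow excludes fixed points, recurrent and periodic orbits in the interior and forces the limit sets onto the boundary where $T$ attains the extreme values of its range. Since $T\in(0,1)$ in the interior while $T\in[0,1]$ on the compact closure $\overline{\mathbf{S}}$, and since limit sets of orbits in a compact invariant set are nonempty, compact and invariant, this places the $\alpha$-limit set (as $\bar\tau\to-\infty$, where $T\to0$) on $T=0$ and the $\omega$-limit set (as $\bar\tau\to+\infty$, where $T\to1$) on $T=1$.

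It then remains to pin down $X_2$ on $T=0$ and $X_1$ on $T=1$, and here the constraint~\eqref{massconstr} does the work with essentially no further computation. The polynomial $G(X_1,X_2,T)=(1-T)X_2-\mu TX_1$ is continuous and vanishes on $\mathbf{S}$ by definition; since $dG/d\bar\tau$ is proportional to $G$, the set $\{G=0\}$ is invariant, so every interior orbit lies in $\{G=0\}$ and, by continuity, so does its closure and hence its limit sets. Evaluating $G=0$ at $T=0$ forces $X_2=0$, while at $T=1$ it forces $-\mu X_1=0$, i.e.\ $X_1=0$ because $\mu\neq0$, which is precisely the claimed refinement. I do not expect a genuine obstacle here: the argument is immediate once the sign of $1+q$ is secured. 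The only point needing care is that the constraint surface $\{G=0\}$ extends regularly to $T=0$ and $T=1$, so that the boundary limit points genuinely inherit $G=0$; this is exactly what the regular extension to $\overline{\mathbf{S}}$, combined with the continuity of $G$, guarantees.
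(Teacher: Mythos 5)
Your proposal is correct and follows essentially the same route as the paper's own proof: monotonicity of $T$ via $1+q>0$ and the monotonicity principle to place the $\alpha$- and $\omega$-limit sets on $T=0$ and $T=1$, then the constraint \eqref{massconstr} to conclude $X_2=0$ and $X_1=0$ on the respective boundaries. Your added remark that the invariance and continuity of $G$ ensure the limit sets inherit the constraint is a slightly more explicit justification of a step the paper leaves implicit, but it is the same argument.
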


\begin{proof}
	We make use of the \emph{monotonicity principle}. Due to $1+q>0$, a quick inspection of equation~\eqref{T-evolution-massive}
	reveals that $T(\bar{\tau})$ is monotonically increasing in {\bf S} and, therefore, there are no periodic nor recurrent orbits in the interior of the state space. We conclude that the $\alpha$-limit set of all solutions is located at the $T=0$ invariant boundary associated to the asymptotic past $H\rightarrow +\infty$, while the $\omega$-limit set is located on the $T=1$ invariant boundary, associated with the asymptotic future $H\rightarrow0$. Moreover, the constraint~\eqref{massconstr} implies that $X_1\rightarrow0$ as $T\rightarrow 1$, and $X_2\rightarrow 0$ as $T \rightarrow 0$.
\end{proof}

\begin{remark}
	This lemma implies, in particular, the result in~\cite{bentoetal93}, that for the pure Yang-Mills field, the past asymptotics is dominated by the "massless potential", while the future asymptotics it is dominated by the "mass potential".
\end{remark}
We now proceed with a detailed analysis of the past and future asymptotics. 
\subsection*{A. Past asymptotics for massive Yang-Mills fields and perfect fluids}
%
Since along $G=0$, we have $X_2\rightarrow0$ as $T\rightarrow0$, then the invariant boundary $T=0$ coincides with the $T=0$ boundary of the massless Yang-Mills state space. It follows that there exist the fixed points $\mathrm{FL}_0$, as well as the deformed circle L$_\mathrm R$ and the disk D$_\mathrm R$ of fixed points now for:
\begin{align}
\mathrm{D_R}:&\qquad 0\leq\Sigma^2_{\mathrm{YM}}+X^4_1\leq 1,~T=0,~X_2=0\\ 
\text{L}_{\text{R}}:&\qquad \Sigma^2_{\mathrm{YM}}+X^4_1=1,~T=0,~X_2=0\\ 
{\text{FL}}_0:&\qquad \Sigma_{\mathrm{YM}} = X_1 = 0,~T=0, ~X_2=0\,.
\end{align}
The goal of this subsection is to prove the next theorem which gives a description of the past asymptotics of the model.
\begin{theorem} Consider solutions of the system \eqref{dyn1-massive} and \eqref{massconstr} with $0<\Omega_{\mathrm m}<1$. For $\bar{\tau}\rightarrow-\infty$:
 \begin{itemize}
 	\item[(i)] If $0<\gamma_{\mathrm m} < \frac{4}{3}$, all solutions converge to the circle of fixed points $\mathrm{L_R}$. More precisely, each fixed point on $\mathrm{L}_{\mathrm{R}}$ is the $\alpha$-limit point of a 2-parameter set of solutions. 
 	\item[(ii)] If $\gamma_{\mathrm m} = \frac{4}{3}$, all solutions converge to the disk of fixed points $\mathrm{D_R}$. More precisely, each fixed point on $\mathrm{D}_{\mathrm{R}}$ is the $\alpha$-limit point of a unique solution.
  \item[(iii)] If $\gamma_{\mathrm m} > \frac{4}{3}$, all solutions  converge to the fixed point $\mathrm{FL_0}$.
   \end{itemize}
\end{theorem}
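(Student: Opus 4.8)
The plan is to combine the localization of limit sets provided by Lemma~\ref{lemma-massive} with a linearization at the boundary fixed points and a monotonicity argument, exploiting the fact that the induced flow on the invariant boundary $T=0$ (where $X_2=0$) coincides with the massless $T=0$ flow already analyzed in Section~\ref{massless}. First I would note that, by Lemma~\ref{lemma-massive}, the $\alpha$-limit set of every interior orbit lies in $\{T=0,\,X_2=0\}$, on which the restricted system reproduces the massless boundary flow; hence its only invariant subsets are the fixed points $\mathrm{FL}_0$, the circle $\mathrm{L_R}$, and (for $\gamma_{\mathrm m}=4/3$) the disk $\mathrm{D_R}$. On this boundary $\Omega_{\mathrm{YM}}$ obeys $d\Omega_{\mathrm{YM}}/d\bar\tau=3\Omega_{\mathrm{YM}}(1-\Omega_{\mathrm{YM}})(\gamma_{\mathrm m}-\tfrac43)$, which is strictly signed for $0<\Omega_{\mathrm{YM}}<1$ when $\gamma_{\mathrm m}\neq\tfrac43$, so the boundary flow has no cycles nor nontrivial recurrence. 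Since the $\alpha$-limit set of an orbit with compact closure is connected, invariant and chain-transitive, each interior orbit therefore has a single boundary fixed point as its $\alpha$-limit.

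Next I would identify which of these fixed points occur as $\alpha$-limits of orbits with $0<\Omega_{\mathrm m}<1$ by linearizing the constrained system at the $T=0$ fixed points. Because along $G=0$ the relation $X_2=\mu T X_1/(1-T)$ is quadratic near $T=0$, the variable $X_2$ is slaved and the linearizations reproduce the massless eigenvalues together with one transverse direction controlling $T$: at $\mathrm{FL}_0$ the eigenvalues are $\tfrac34(\gamma_{\mathrm m}-\tfrac43)$, $\tfrac32(\gamma_{\mathrm m}-\tfrac43)$ and $\tfrac{3\gamma_{\mathrm m}}{4}>0$ (the last into the interior); at each point of $\mathrm{L_R}$ they are $0$ (tangent to the circle), $-3(\gamma_{\mathrm m}-\tfrac43)$, and $+1$ (into the interior); at each point of $\mathrm{D_R}$ they are $0,0,+1$. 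The decisive structural facts are that the interior unstable manifold of $\mathrm{FL}_0$ is exactly the Friedmann--Lema\^itre orbit, which lies in the invariant subset $\Omega_{\mathrm m}=1$, whereas the interior unstable orbits issuing from $\mathrm{L_R}$ lie in the invariant pure Yang--Mills subset $\Omega_{\mathrm m}=0$.

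With these ingredients the three cases follow. For $\gamma_{\mathrm m}>\tfrac43$, an orbit with $\alpha$-limit on $\mathrm{L_R}$ would have to belong to its interior unstable set, hence to $\{\Omega_{\mathrm m}=0\}$, contradicting $0<\Omega_{\mathrm m}<1$; the only remaining possibility is $\mathrm{FL}_0$, and this is consistent because its three eigenvalues are then all positive, making it a source. For $\gamma_{\mathrm m}<\tfrac43$ the roles reverse: the interior unstable set of $\mathrm{FL}_0$ is the excluded $\Omega_{\mathrm m}=1$ orbit, so every admissible orbit must limit to a point of $\mathrm{L_R}$, whose transverse eigenvalue $+1$ guarantees an interior unstable manifold realizing each such point as an $\alpha$-limit, with the parameter count dictated by its dimension. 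For $\gamma_{\mathrm m}=\tfrac43$ the boundary is the disk $\mathrm{D_R}$ of fixed points, each carrying a two-dimensional center manifold ($\mathrm{D_R}$ itself) and a single transverse eigenvalue $+1$, so precisely one interior orbit emanates from each point.

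I expect the main obstacle to be case (i), $\gamma_{\mathrm m}<\tfrac43$. Unlike the other two cases, $\Omega_{\mathrm{YM}}$ is not monotone in the interior: from $\gamma_{\mathrm{YM}}=\tfrac43-\tfrac23 X_2^2/\Omega_{\mathrm{YM}}$ one finds $\gamma_{\mathrm m}-\gamma_{\mathrm{YM}}=(\gamma_{\mathrm m}-\tfrac43)+\tfrac23 X_2^2/\Omega_{\mathrm{YM}}$, whose sign is indefinite, so no single monotone function drives all orbits to the boundary value $\Omega_{\mathrm{YM}}=1$. The resolution is to use Lemma~\ref{lemma-massive} to place the $\alpha$-limit on $\{T=0,\,X_2=0\}$, where $\gamma_{\mathrm{YM}}=\tfrac43$ and monotonicity of $\Omega_{\mathrm{YM}}$ is recovered, then to use the chain-transitive, cycle-free structure of the boundary flow to reduce the $\alpha$-limit to a single fixed point, and finally to invoke the invariance of the $\Omega_{\mathrm m}=1$ subset to exclude $\mathrm{FL}_0$, leaving only $\mathrm{L_R}$.
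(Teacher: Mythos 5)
Your proposal is correct and follows essentially the same route as the paper's proof: localization of $\alpha$-limit sets via Lemma~\ref{lemma-massive}, identification of the $T=0$ boundary flow with the massless case, elimination of $X_2$ near $T=0$ through the constraint (the paper phrases this as the implicit function theorem applied to $G$, using $\partial_{X_2}G|_{T=0}=1$), and the resulting linearization reproducing the massless eigenvalues. The additional details you supply---the chain-transitivity/no-cycles argument and the explicit exclusions via the invariant subsets $\Omega_{\mathrm m}=0$ and $\Omega_{\mathrm m}=1$---are precisely what the paper leaves implicit by referring back to the orbit structure established in the proof of Proposition~\ref{prop1}.
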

\begin{proof}
The proof uses Lemma~\ref{lemma-massive} and the fact that $X_2=0$ at $T=0$, which means that the orbit structure on this boundary coincides with that of the massless case studied  in Subsection~\ref{T1}~B. In fact, this boundary consists of heteroclinic orbits when $\gamma_{\mathrm{m}}\neq \frac{4}{3}$, or only of fixed points when $\gamma_{\mathrm{m}}= \frac{4}{3}$, see Figure~\ref{fig:T0_Massless}.  Therefore, the possible past attracting sets are fixed points located at $T=0$. In order to deduce the stability properties of the fixed points, we need to solve the constraint~\eqref{massconstr}. Although it is not possible to solve this constraint globally, we can uniquely solve it locally at the points where $\nabla G\neq0$ by making use of the implicit function theorem. Since $\partial_{X_2}G|_{T=0}=1$, in a neighborhood of the $T=0$ boundary, then we can eliminate the variable $X_2$ from the eigenvalue analysis of the fixed points on $T=0$.  This yields the same results as the linearisation around the corresponding similar fixed points of the massless case.
\end{proof}
The physical interpretation of the above theorem is that, if $\gamma_\mathrm{m}<4/3$, the dynamics are past asymptotically dominated by the massless Yang-Mills field while, if the fluid content has an equation of state stiffer than radiation, the past asymptotics is governed by the Friedmann-Lema\^itre solution. If $\gamma_{\mathrm{m}}=4/3$, then the model is neither fluid of massless Yang-Mills dominated towards the past. 
%
\subsection*{B. Future asymptotics for massive Yang-Mills fields and perfect fluids}
We start by describing the future invariant subset $T=1$. Since $X_1=0$ at $T=1$, the induced flow on this boundary is given by
\begin{subequations}
	\begin{align}
	\frac{d\Sigma_{\mathrm{YM}}}{d\bar{\tau}} &=- \mu X_2 \label{1} \\
	\frac{dX_2}{d\bar{\tau}} &=\mu \Sigma_{\mathrm{YM}}  \label{2}~,
	\end{align}
\end{subequations}
where now
\begin{equation}
	1 - \Omega_{\mathrm m}= \Sigma^{2}_{\mathrm{YM}}+X^2_2.
\end{equation}
The $T=1$ boundary is foliated by periodic orbits $\mathcal{P}_{\Omega_{\mathrm{YM}}}$, parametrized by constant values of $\Omega_{\mathrm{YM}}=\Sigma^2_{\mathrm{YM}}+X^2_2$, with the fixed point $\mathrm{FL}_1$ given by
\begin{equation}
{\text{FL}}_1:\qquad \Sigma_{\mathrm{YM}} = X_2 = 0,~T=1, ~X_1=0
\end{equation}
and located at the center, see Fig.~\ref{T1massive}.
\begin{figure}[ht!]
	\begin{center}
			\includegraphics[width=0.30\textwidth]{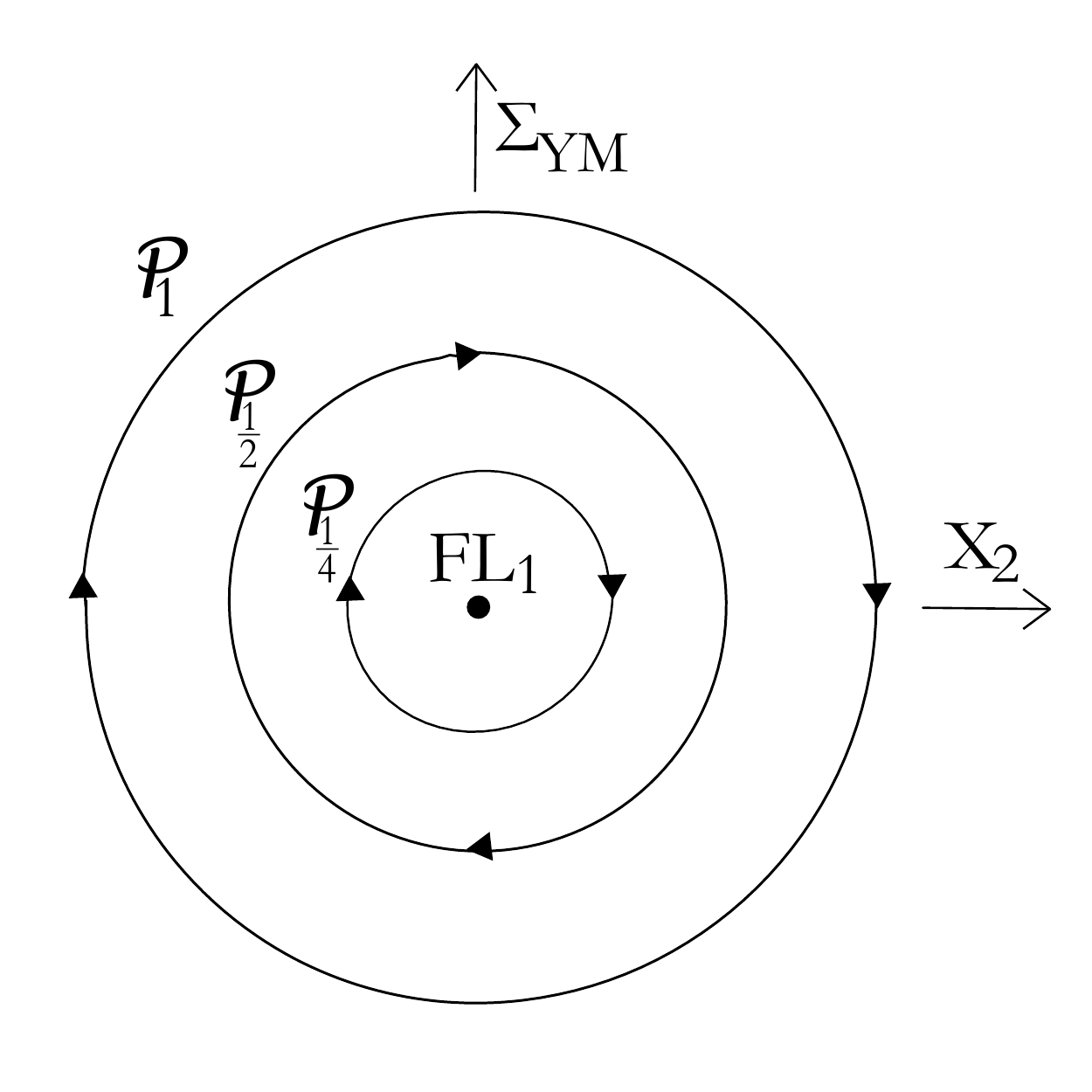}
		\end{center}
	\vspace{-0.5cm}
	\caption{Representation of the invariant boundary $T=1$ when $\mu>0$.}
	\label{T1massive}
\end{figure}
The objective of this subsection is to prove the following result:
\begin{theorem}  Consider solutions of the system \eqref{dyn1-massive} and \eqref{massconstr} with $0<\Omega_{\mathrm{m}}<1$. For $\bar{\tau}\longrightarrow+\infty$:
	\label{theorem}
 \begin{itemize}
  \item[(i)] If $\gamma_{\mathrm m}>1$, then all solutions converge  to the outer periodic orbit $\mathcal{P}_{1}$ with $\Omega_\mathrm{m}=0$. 
  \item[(ii)] If $\gamma_{\mathrm m}<1$, then all solutions converge to fixed point $\mathrm{FL_1}$ with $\Omega_\mathrm{m}=1$.
  \item[(iii)] If $\gamma_{\mathrm m}=1$, then a 1-parameter set of solutions converge to each inner periodic orbit $\mathcal{P}_{\Omega_{\mathrm{YM}}}$.
   \end{itemize}
\end{theorem}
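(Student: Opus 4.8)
The plan is to combine Lemma~\ref{lemma-massive} with the method of averaging (see \cite{SVM00}) applied near the future boundary $T=1$. By Lemma~\ref{lemma-massive} the $\omega$-limit set of every interior orbit of \eqref{dyn1-massive} lies on the invariant boundary $T=1$ with $X_1=0$, where the induced flow \eqref{1}--\eqref{2} is the linear oscillator $\dot\Sigma_{\mathrm{YM}}=-\mu X_2$, $\dot X_2=\mu\Sigma_{\mathrm{YM}}$ of frequency $\mu$. Its orbits are the circles $\Sigma^2_{\mathrm{YM}}+X^2_2=\Omega_{\mathrm{YM}}=\mathrm{const}$, i.e.\ the periodic orbits $\mathcal{P}_{\Omega_{\mathrm{YM}}}$ together with their centre $\mathrm{FL}_1$. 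Hence the whole problem reduces to deciding \emph{which} level $\Omega_{\mathrm{YM}}\in[0,1]$ each orbit selects as $\bar\tau\to+\infty$; this is governed entirely by the slow evolution of $\Omega_{\mathrm{YM}}$ along the approach to $T=1$.

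First I would solve the constraint \eqref{massconstr} for $X_1=(1-T)X_2/(\mu T)$, valid near $T=1$, so that close to the boundary the state is described by the slow variable $u:=1-T$ together with $(\Sigma_{\mathrm{YM}},X_2)$. Writing the latter in polar form $\Sigma_{\mathrm{YM}}=\sqrt{\Omega_{\mathrm{YM}}}\cos\varphi$, $X_2=\sqrt{\Omega_{\mathrm{YM}}}\sin\varphi$ (up to $O(u^4)$ corrections coming from the $X_1^4$ term in $\Omega_{\mathrm{YM}}$), one obtains a slow--fast system in which $\varphi$ is a fast angle with $\dot\varphi=\mu+O(u)$, while $\Omega_{\mathrm{YM}}$ and $u$ are slow: by \eqref{T-evolution-massive} one has $\dot u=-\tfrac12(1+q)Tu^3$, and the evolution equation for $\Omega_{\mathrm{YM}}$ carries the prefactor $(1-T)^2=u^2$. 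The decisive computation is the fast average of $\gamma_{\mathrm{YM}}=\tfrac{4}{3}-\tfrac{2}{3}X^2_2/\Omega_{\mathrm{YM}}$ along $\mathcal{P}_{\Omega_{\mathrm{YM}}}$: since $\langle X^2_2/\Omega_{\mathrm{YM}}\rangle=\langle\sin^2\varphi\rangle=\tfrac12$, one gets $\langle\gamma_{\mathrm{YM}}\rangle=1$. This is precisely why the threshold is $\gamma_{\mathrm{m}}=1$ rather than the massless value $4/3$.

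Next I would pass to the slow time $s$ defined by $ds=u^2\,d\bar\tau$ and apply a near-identity averaging transformation to remove the oscillatory part of the drift of $\Omega_{\mathrm{YM}}$ at leading order. The averaged slow equation is the logistic equation
\begin{equation*}
\frac{d\bar\Omega_{\mathrm{YM}}}{ds}=3(\gamma_{\mathrm{m}}-1)\,\bar\Omega_{\mathrm{YM}}\,(1-\bar\Omega_{\mathrm{YM}})+O(u).
\end{equation*}
Since $u\sim(c\,\bar\tau)^{-1/2}$ with $c=\langle1+q\rangle>0$ (one checks $\langle1+q\rangle>0$ on every $\mathcal{P}_{\Omega_{\mathrm{YM}}}$), the slow time $s=\int u^2\,d\bar\tau\sim c^{-1}\ln\bar\tau$ diverges, so the averaged flow runs for infinite slow time. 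For $\gamma_{\mathrm{m}}>1$ the equilibrium $\bar\Omega_{\mathrm{YM}}=1$ is the global attractor on $(0,1]$, yielding convergence to $\mathcal{P}_1$ (case (i)); for $\gamma_{\mathrm{m}}<1$ the attractor is $\bar\Omega_{\mathrm{YM}}=0$, yielding convergence to $\mathrm{FL}_1$ (case (ii)); and for $\gamma_{\mathrm{m}}=1$ the leading drift vanishes, so $\Omega_{\mathrm{YM}}$ is an adiabatic invariant and each orbit converges to the periodic orbit fixed by its conserved value, which is case (iii).

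The hard part will be making the averaging rigorous, since the ``small parameter'' $u=1-T$ is itself a state variable tending to zero only along the flow rather than being a fixed $\epsilon$. I would exploit the improving scale separation: the oscillation period stays $O(1)$ in $\bar\tau$ while the drift rate $u^2\to0$, so the system is increasingly adiabatic and the near-identity transformation can be built order by order with coefficients bounded uniformly near $T=1$. For $\gamma_{\mathrm{m}}\neq1$ the leading logistic drift is of definite sign and, integrated against the divergent slow time, forces $\bar\Omega_{\mathrm{YM}}$ to the appropriate endpoint. The genuinely delicate case is $\gamma_{\mathrm{m}}=1$, where the leading drift vanishes and one must verify that the residual converges: the first correction contributes a drift of order $u$ in the slow time $s$, equivalently of order $u^3\sim\bar\tau^{-3/2}$ in $\bar\tau$, whose integral over $\bar\tau$ up to $+\infty$ is finite. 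This shows that $\Omega_{\mathrm{YM}}$ has a finite positive limit and therefore singles out a definite inner periodic orbit $\mathcal{P}_{\Omega_{\mathrm{YM}}}$, completing case (iii).
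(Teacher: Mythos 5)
For parts (i) and (ii) your proposal follows essentially the same route as the paper's own proof: Lemma~\ref{lemma-massive} confines the $\omega$-limit set to the boundary $T=1$ with $X_1=0$; the constraint \eqref{massconstr} is solved locally for $X_1$ by the implicit function theorem; $\varepsilon=1-T$ is treated as a \emph{time-dependent} perturbation parameter; the decisive average $\langle \gamma_{\mathrm{YM}}\rangle=1$ is computed over the periodic orbits of \eqref{1}--\eqref{2} (you via equipartition $\langle\sin^2\varphi\rangle=\tfrac12$, the paper via a virial-type identity — the same fact); a near-identity transformation removes the oscillatory part of the drift; and the resulting truncated equation is the logistic equation $3(\gamma_{\mathrm m}-1)y(1-y)$. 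Your device of the divergent slow time $s=\int(1-T)^2\,d\bar\tau\sim\ln\bar\tau$ plays exactly the role of the paper's Gronwall comparison over the bootstrapping intervals $[\bar\tau_n,\bar\tau_{n+1}]$ with $\bar\tau_{n+1}-\bar\tau_n=\varepsilon_n^{-2}$: both show that the non-autonomous error (summable in slow time since $\int\mathcal{O}(u^3)\,d\bar\tau<\infty$) cannot alter the logistic limits $0$ or $1$. Your order count for the residual ($\mathcal{O}(u^3)$ versus the paper's $\mathcal{O}(\varepsilon^4)$ in \eqref{fullAv}) is more conservative but still integrable, so this discrepancy is harmless.

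The genuine gap is in case (iii). What your adiabatic-invariant argument delivers is: \emph{every} solution has $\Omega_{\mathrm{YM}}(\bar\tau)$ converging to some finite limit, hence converges to \emph{some} orbit on $T=1$. But the statement asserts more, namely that \emph{each} inner periodic orbit $\mathcal{P}_{\Omega_{\mathrm{YM}}}$, $\Omega_{\mathrm{YM}}\in(0,1)$, is the $\omega$-limit of a 1-parameter set of solutions; your argument as written never shows the limit map is surjective — a priori the set of attained limit values could omit part of $(0,1)$, or even degenerate to the endpoints. The paper closes exactly this point constructively: for the truncated averaged system \eqref{AverCrit1}--\eqref{AverCrit2} the axis $\varepsilon=0$ is a \emph{normally hyperbolic} line of fixed points (eigenvalues $0$ and $-\tfrac34$), so each point $\bar z_0\in[0,1]$ is the $\omega$-limit of a unique interior orbit, and a shadowing/pseudo-trajectory argument (using compactness of the state space and the $K\varepsilon_n^2$ estimates) transfers this existence statement back to the full system, producing for each $\bar z_0$ an actual solution with $\Omega_{\mathrm{YM}}\to\bar z_0$. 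Alternatively you can repair your own route: your tail estimate is uniform, $|\Omega_{\mathrm{YM}}(+\infty)-\Omega_{\mathrm{YM}}(\bar\tau_0)|\le C\,\varepsilon(\bar\tau_0)$, so on a slice $\{T=1-\varepsilon_0\}$ the limit map is a uniform limit of continuous maps (hence continuous) and lies within $C\varepsilon_0$ of the identity in the $\Omega_{\mathrm{YM}}$ coordinate; by the intermediate value theorem it is onto an interval that exhausts $(0,1)$ as $\varepsilon_0\to0$, with each value attained along a $1$-parameter family of orbits. Without one of these two supplements, your final sentence conflates ``every solution selects a definite periodic orbit'' with ``every inner periodic orbit is selected,'' and the latter is what case (iii) claims.
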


\begin{proof}
The proof is based on Lemma~\ref{lemma-massive} together with averaging techniques and consists of an adaptation of the methods used in Ref.\cite{alhetal15}.
An important difference with respect to the standard averaging theory is that the perturbation parameter $\varepsilon$ will not be a constant, but a function of time here. We start by recalling that each periodic orbit on $T=1$ has an associated time period $P(\Omega_{\mathrm{YM}})$, so that, for a given function $f$, its average over a time period 
characterized by $\Omega_{\mathrm{YM}}$ is given by 
\begin{equation}
 \langle f \rangle_{\Omega_{\mathrm{YM}}} = \frac{1}{P(\Omega_{\mathrm{YM}})} \int^{\bar{\tau}_0+P(\Omega_{\mathrm{YM}})}_{\bar{\tau}_0} f(\bar{\tau})\,d\bar{\tau}.
\end{equation}
Differentiating \eqref{2} and using \eqref{1} gives
\begin{equation}
 \frac{d}{d\bar{\tau}}\left(X_2 \frac{dX_2}{d\bar{\tau}} \right)-\left(\frac{dX_2}{d\bar{\tau}}\right)^2 + \mu^2 X^2_2 =0.
\end{equation}
Taking the average for a periodic orbit gives,
\begin{equation}
 \left\langle \left(\frac{dX_2}{d\bar{\tau}}\right)^2 \right\rangle =  \mu^2 \langle X^2_2 \rangle, 
\end{equation}
which implies
\begin{equation}
 \langle \Sigma^2_{\mathrm{YM}} \rangle = \langle X^{2}_{2} \rangle.
\end{equation}
Thus, on the $T=1$ invariant subset
\begin{equation}\label{gammaAv}
 \langle \gamma_{\mathrm{YM}} \rangle = \frac{4}{3}-\frac{2}{3}\frac{\langle X^2_2 \rangle}{\langle \Sigma^2_{\mathrm{YM}}\rangle + \langle X^{2}_2 \rangle}=1,
\end{equation}
which does not depend on $\Omega_{\mathrm{YM}}$ and, on average, the Yang-Mills field  behavior resembles that of dust. 

We now set $\varepsilon(\bar{\tau})=1-T(\bar{\tau})$ and consider the system 
\begin{subequations}
 \begin{align}
 \label{evol-omega}
 \frac{d\Omega_{\mathrm{YM}}}{d\bar{\tau}} &= 3\varepsilon^2 \left(\gamma_{\mathrm{m}}\Omega_{\mathrm{YM}} -\gamma_{\mathrm{YM}}\Omega_{\mathrm{YM}}\right)\left(1-\Omega_{\mathrm{YM}}\right) := \varepsilon^2 f(\Omega_{\mathrm{YM}},\bar{\tau},\varepsilon) \\
 \label{evol-epsi}
  \frac{d\varepsilon}{d\bar{\tau}} &= -\frac{1}{2}(1+q)\varepsilon^3 (1-\varepsilon),
 \end{align}
\end{subequations}
where
\begin{equation}
\gamma_{\mathrm{YM}}\Omega_{\mathrm{YM}}=\frac{4}{3}\Omega_{\mathrm{YM}}-\frac{2}{3}X^{2}_{2}\quad,\quad 1 + q = \frac{3}{2}(\gamma_{\mathrm{m}}-(\gamma_{\mathrm{m}}-\gamma_{\mathrm{YM}})\Omega_{\mathrm{YM}})
\end{equation}
and $(X_1,X_2,\Sigma_{\mathrm{YM}})$ solves~\eqref{dyn1-massive} and \eqref{massconstr} with the equation for $T$ replaced by the equation for $\varepsilon$. Recall that $1+q>0$ and, therefore, $\varepsilon$ is monotonically decreasing, so that, $\varepsilon(\bar{\tau})\rightarrow0$ as $\bar{\tau}\rightarrow+\infty$. Moreover, since $\partial_{X_1} G|_{T=1}=\mu\neq0$, we can use the implicit function theorem to solve~\eqref{massconstr} uniquely for $X_1$, in a neighborhood of the $T=1$ boundary. 

We start by applying the near-identity transformation depending on $\varepsilon$,
\begin{equation}\label{NearId}
\Omega_{\mathrm{YM}}(\bar{\tau}) = y(\bar{\tau}) + \varepsilon^2(\bar{\tau}) w(y,\bar{\tau},\varepsilon).
\end{equation}
%
The evolution equation for $y$ is obtained using \eqref{evol-omega} and \eqref{evol-epsi}, which gives
\begin{equation}
\begin{split}
\label{y-evolution}
\frac{dy}{d\bar{\tau}} &= \left(1+\varepsilon^2\frac{\partial w}{\partial y}\right)^{-1}\left[\frac{d\Omega_{\mathrm{YM}}}{d\bar{\tau}}-\left(2\varepsilon w+\varepsilon^2\frac{\partial w}{\partial\varepsilon} \right)\frac{d\varepsilon}{d\bar{\tau}} -\varepsilon^2\frac{\partial w}{\partial\bar{\tau}} \right] \\
&=\frac{\varepsilon^2}{1+\varepsilon^2\frac{\partial w}{\partial y}} \Big[3(\gamma_{\mathrm m}-1)y(1-y)+3(1-\gamma_{\mathrm{YM}})y(1-y)+3w\varepsilon^4(\gamma_{\mathrm m}-\gamma_{\mathrm{YM}}) + 3\varepsilon^6 (\gamma_{\mathrm m}-\gamma_{\mathrm{YM}})- \\
 &- \frac{\partial w}{\partial\bar{\tau}}+\left(2w+\varepsilon\frac{\partial w}{\partial\varepsilon} \right)\left(\frac{1+q}{2}\right)(1-\varepsilon)\varepsilon^2\Big]
\end{split}
\end{equation}
and where we used~\eqref{gammaAv}. Setting
\begin{equation} \label{choice}
\begin{split}
\frac{\partial w}{\partial\bar{\tau}} 
&= f(y,\bar{\tau},\varepsilon)-\langle f(y,\cdot,0)\rangle \\
&=3(1-\gamma_{\mathrm{YM}}) y (1-y) \\
&= \left( - y + 2 X^{2}_{2} \right)(1-y)
\end{split}
\end{equation}
and expanding \eqref{y-evolution} in powers of $\varepsilon$, for $\varepsilon$ sufficiently small, the equation for $\Omega_{\mathrm{YM}}$ is transformed into the full averaged equation 
\begin{equation}\label{fullAv}
\frac{dy}{d\bar{\tau}} ={\varepsilon^2} \langle f \rangle(y) + {\varepsilon^4 h(y,w,\bar{\tau},\varepsilon)}+\varepsilon^{5}(1+q)\left(\frac{1}{2}\frac{\partial w}{\partial \varepsilon}-w\right)+ \mathcal{O}({\varepsilon^{6}}),
\end{equation}
where
\begin{eqnarray}
\langle f\rangle(y)&=&\langle f(y,\cdot,0) \rangle = 3(\gamma_{\mathrm m}-1)y(1-y) \\
h(y,w,\bar{\tau},\varepsilon) &= & w(1+q)+3w(1-2y)(\gamma_{\mathrm m}-\gamma_{\mathrm{YM}})(1-2y)-3(\gamma_{\mathrm m}-\gamma_{\mathrm{YM}})\frac{\partial w}{\partial y}y(1-y). \label{hdef}
\end{eqnarray}
Note that, due to the previous analysis of the invariant set $T=1$, i.e. $\varepsilon=0$, the right-hand-side of~\eqref{choice} is, for large times, almost-periodic and has zero mean, which, in particular, implies that $w$ is bounded.
Then, it follows from~\eqref{NearId} that $y$ is also bounded. Moreover, for sufficiently small $\varepsilon$, Eq.~\eqref{fullAv} implies that $y$ is monotonic, either increasing or decreasing depending on the sign of $\gamma_{\mathrm{m}}-1\neq0$ and, hence, $y$ has a limit when $\bar{\tau}\rightarrow+\infty$.

Now, we study the evolution of the truncated averaged equation, which is obtained by dropping all higher order terms in~\eqref{fullAv} as
\begin{eqnarray}
\frac{d\bar{y} }{d\bar{\tau}}  &=& 3\varepsilon^2(\gamma_{\mathrm{m}}-1)\bar{y}(1-\bar{y}) \label{orisys1} \\
\frac{d\varepsilon}{d\bar{\tau}}  &=&-\frac{1}{2}(1+q)(1-\varepsilon)\varepsilon^3. \label{orisys2}
\end{eqnarray}
In this system, the $\varepsilon=0$ axis consists of a non-hyperbolic line of fixed points. 
Making the change of time variable 
$$\frac{1}{\varepsilon^2}\frac{d}{d\bar{\tau}}=\frac{d}{d\tilde{\tau}},$$ 
which does not affect the behavior of interior orbits, i.e. orbits with $\varepsilon>0$, we get
\begin{eqnarray}
\frac{d\bar{y} }{d\tilde{\tau}} &=& 3(\gamma_{\mathrm{m}}-1)\bar{y} (1-\bar{y} ) \\
\frac{d\varepsilon}{d\tilde{\tau}} &=& -\frac{1}{2}\varepsilon(1+q)(1-\varepsilon).
\end{eqnarray}
For $\gamma_{\mathrm{m}}-1\neq 0$, the above dynamical system has the two fixed points $\mathrm{P}_1=(\bar{y}=0;\varepsilon=0)$ and $\mathrm{P}_2=(\bar{y}=1;\varepsilon=0)$, where the $\varepsilon=0$ axis consists now of the heteroclinic orbit $\mathrm{P}_1\rightarrow \mathrm{P}_2$ (resp. $\mathrm{P}_2\rightarrow \mathrm{P}_1$), in case $\gamma_{\mathrm{m}}-1 > 0$ (resp. $\gamma_{\mathrm{m}}-1 < 0$).
Thus, for $\gamma_{\mathrm{m}}>1$ (resp. $\gamma_{\mathrm{m}}<1$), solution trajectories of the system~\eqref{orisys1} and \eqref{orisys2} will converge to the fixed point $\mathrm{P}_2$ (resp. $\mathrm{P}_1$), tangentially to the $\varepsilon=0$ axis.

Next, we show that solutions $y$, of the full averaged Eq. \eqref{fullAv}, have the same limit as the solutions $\bar{y}$ of the truncated averaged equation when $\bar{\tau}\rightarrow +\infty$. For this, we define the sequences $\{\bar\tau_n\}$ and $\{\varepsilon_n\}$ such that $\varepsilon_n=\varepsilon(\bar\tau_n)$, with $n\in \mathbb{N}$, and
\begin{eqnarray}
\bar{\tau}_{n+1}-\bar{\tau}_n &=& \frac{1}{\varepsilon_n^2}\\
\bar{\tau}_0 &=& 0 \\
\varepsilon_0&>&0, 
\end{eqnarray}
where $\lim \bar{\tau}_n = +\infty$ and 
$\lim \varepsilon_n=0$, since $\varepsilon(\bar{\tau})\rightarrow0$ as $\bar{\tau}\rightarrow+\infty$. We estimate $|\eta(\bar{\tau})|=|y(\bar{\tau})-\bar{y}(\bar{\tau})|$ as follows
\begin{equation}
\begin{split}
|\eta(\bar{\tau})|&=\left|\int_{\bar{\tau}_n}^{\bar{\tau}}\left(3\varepsilon^2(\gamma_{\mathrm{m}}-1)y(1-y)+\varepsilon^4h(y,w,\varepsilon,s)\right)ds-\int_{\bar{\tau}_n}^{\bar{\tau}}3\varepsilon^2(\gamma_{\mathrm{m}}-1)\bar{y}(1-\bar{y})ds + \mathcal{O}(\varepsilon^5)\right| \nonumber \\
&\leq \varepsilon^2\int_{\bar{\tau}_n}^{\bar{\tau}}3\underbrace{|\gamma_{\mathrm{m}}-1|}_{|\cdot|\leq C} |(y-\bar{y})\underbrace{(1-(y+\bar{y}))}_{|\cdot|\leq 1}|ds+\varepsilon^4\int_{\bar{\tau}_n}^{\bar{\tau}} \underbrace{|h(y,w,\varepsilon,s)|}_{|\cdot|\leq M} ds+ \mathcal{O}(\varepsilon^5) \\
&\leq 3C\varepsilon_n^2\int_{\bar{\tau}_n}^{\bar{\tau}}|\eta(s)|ds+\varepsilon_n^4M(\bar{\tau}-\bar{\tau}_n)+\mathcal{O}(\varepsilon^5_n), \nonumber
\end{split}
\end{equation} 
where $C$ and $M$ are some positive constants. By Gronwall's inequality
\begin{equation}
|\eta(\bar{\tau})|\leq \frac{\varepsilon_n^2 M}{3C}(e^{3C\varepsilon_n^2(\bar{\tau}-\bar{\tau}_n)}-1)+\mathcal{O}(\varepsilon^3_n),
\end{equation}
and using the fact that $\bar{\tau}-\bar{\tau}_n\in [0,1/\varepsilon_n^2]$, we find
\begin{equation}
|\eta(\bar{\tau})|\leq K\varepsilon_n^2,
\end{equation}
with $K$ a positive constant. As $\varepsilon_n\rightarrow 0$,  then $|\eta(\bar{\tau})|\rightarrow0$, and so $y$ and $\bar{y}$ have the same limit. Finally, from equation~\eqref{NearId}, the triangular inequality, and the fact that $\varepsilon\rightarrow0$ as $\bar{\tau}\rightarrow+\infty$, it follows that $\Omega_{\mathrm{YM}}$ has the same limit as $\bar{y}$ and, therefore, converges to $0$ or $1$, depending on the sign of $\gamma_{\mathrm{m}}-1\neq0$. This proves cases $(i)$ and $(ii)$ of the theorem.
\\\\
Now, we analyses the case when $\gamma_{\mathrm{m}}=1$. In that case, the equation for $y$ is given by
\begin{equation}
\frac{dy}{d\bar{\tau}}=\varepsilon^4 h(y,w,\varepsilon,\bar{\tau})+\mathcal{O}(\varepsilon^5).
\end{equation}
Taking the average of $h$, given in~\eqref{hdef}, at $\varepsilon=0$, 
\begin{eqnarray}
\langle h\rangle(y,w) &=&\langle h(y,\cdot,0)\rangle = \frac{1}{P}\int_{0}^{P}h(y,w,0,\bar{\tau})d\bar{\tau} \nonumber \\
&=&  \frac{1}{P}\int_{0}^{P}w(y,\bar{\tau},0)(1+q)d\bar{\tau} \nonumber \\
&=& \frac{1}{P}\int_{0}^{P}\frac{3}{2}w(y,\bar{\tau},0)(1+(\gamma_{\mathrm{YM}}-1)y)d\bar{\tau} \nonumber \\
&=& \frac{3}{2}\langle w(y,\cdot,0)\rangle =\frac{3}{2}\langle w\rangle(y),
 \end{eqnarray}
we consider the truncated averaged equation
\begin{eqnarray}
\frac{d\bar{z}}{d\bar{\tau}}&=&\frac{3}{2} \varepsilon^4 \langle w\rangle(\bar{z}) \label{AverCrit1}\\
\frac{d\varepsilon}{d\bar{\tau}}&=&-\frac{3}{4}\varepsilon^4(1-\varepsilon)\,. \label{AverCrit2}
\end{eqnarray}
To resolve the non-hyperbolicity of the line of fixed points at $\varepsilon=0$, we  make the change of time variable $\varepsilon^{-3}d/d\bar{\tau}=d/d\tilde{\tau}$, to obtain
\begin{eqnarray}
\frac{d\bar{z}}{d\tilde{\tau}}&=&\frac{3}{2} \varepsilon \langle w\rangle(\bar{z}) \\
\frac{d\varepsilon}{d\tilde{\tau}}&=&-\frac{3}{4}\varepsilon(1-\varepsilon).
\end{eqnarray}
In this case, the $\varepsilon=0$ axis consists of a line of fixed points with $\bar{z}_0\in[0,1]$, whose linearisation yields the eigenvalues $\lambda_1=0$ and $\lambda_2=-\frac{3}{4}$ with associated eigenvectors $v_1=(\bar{z}=1;\varepsilon=0)$ and $v_2=(\bar{z}=-2\langle w\rangle(\bar{z}_0);\varepsilon=1)$. Therefore, the line is normally hyperbolic and each point on the line is exactly the $\omega$-limit point of a unique interior orbit. This means that there also exists an orbit of the dynamical system~\eqref{AverCrit1} and \eqref{AverCrit2} with $\varepsilon>0$ initially, that converges to $(\bar{z}_0,0)$, for each $\bar{z}_0$, as $\tilde{\tau} \rightarrow +\infty$. 

Just as in the proof of cases $(i)$ and $(ii)$,  we can estimate the term $\mathcal{O}(\varepsilon^5)$ that provides bootstraping sequences. This defines a pseudo-trajectory $\Omega^n_{\mathrm{YM}}(\bar{\tau}_n)=\bar{z}(\bar{\tau}_n)$ of system~\eqref{evol-omega} and \eqref{evol-epsi}, with
\begin{equation}
|\Omega^n_{\mathrm{YM}}(\bar{\tau})-\bar{z}(\bar{\tau})|\leq K\varepsilon_n^2 \,,
\end{equation}
where $\bar{\tau} \in [\bar{\tau}_n,\bar{\tau}_{n+1}]$ and $K$ is a positive constant. Compactness of the state space and the regularity of the flow implies that exists a set of initial values whose solution trajectory $\Omega_{\mathrm{YM}}(\bar{\tau})$ shadows the pseudo-trajectory $\Omega^n_{\mathrm{YM}}(\bar{\tau})$, in the sense that
\begin{equation}
\forall n\in \mathbb{N},~~\forall\bar{\tau}\in[\bar{\tau}_n,\bar{\tau}_{n+1}]:~~|\Omega^n_{\mathrm{YM}}(\bar{\tau})-\Omega_{\mathrm{YM}}(\bar{\tau})|\leq K\varepsilon_n^2 \,.
\end{equation}
Finally, using the triangle inequality, we get
\begin{eqnarray}
|\Omega_{\mathrm{YM}}(\bar{\tau})-\bar{z}(\bar{\tau})|&=&|\Omega_{\mathrm{YM}}(\bar{\tau})-\Omega^n_{\mathrm{YM}}(\bar{\tau})+\Omega^n_{\mathrm{YM}}(\bar{\tau})-\bar{z}(\bar{\tau})| \nonumber \\
&\leq&\underbrace{|\Omega^n_{\mathrm{YM}}(\bar{\tau})-\Omega_{\mathrm{YM}}(\bar{\tau})|}_{\leq K\varepsilon_n^2}+\underbrace{|\Omega^n_{\mathrm{YM}}(\bar{\tau})-\bar{z}(\bar{\tau})|}_{\leq K\varepsilon_n^2} \nonumber \\
&\leq& 2K\varepsilon_n^2 \underbrace{\rightarrow}_{\bar{\tau}_n \rightarrow \infty} 0 \, ,
\end{eqnarray}
and, therefore, for each $\bar{z}_0 \in [0,1]$, there exists a solution trajectory $\Omega_{\mathrm{YM}}(\bar{\tau})$ that converges to a  periodic orbit at $\varepsilon=0$ i.e. $T=1$, characterized by $\Omega_{\mathrm{YM}}=\bar{z}_0$, which concludes the proof of $(iii)$.
\end{proof}
The physical interpretation of the above theorem is that, if $\gamma_\mathrm{m}<1$, then the general solutions of the massive system behave like the Friedmann-Lema\^itre solution asymptotically towards the future. However, if the fluid content has an equation of state stiffer than dust, then the future asymptotics is governed by the pure massive Yang-Mills  solution, which, in particular, exhibits oscillatory behavior If $\gamma_{\mathrm{m}}=1$, then the model is neither fluid of massive  Yang-Mills dominated towards the future. 
%
\section{Concluding Remarks} \label{CR}
The present paper considers spatially homogeneous and isotropic massless and massive Yang-Mills field cosmologies with a perfect fluid. In particular the well-known explicitly solvable massless Yang-Mills isotropic cosmologies~\cite{bentoetal93,GV91} have been contextualized in a global dynamical systems formulation on a compact state-space.

The above dynamical systems formulations can be used to shed light on the dynamics of more general anisotropic cosmological models, where massless Yang-Mills fields are known to exhibit past asymptotic chaotic behavior reminiscent of the Mixmaster universe as well as future asymptotic oscillatory behavior similar to Yang-Mills field in Minkowski space~\cite{BYM05,DK96,BL98,YM05,DK97}. General spatially homogeneous Yang-Mills fields under the Hamiltonian gauge can be written in diagonal form 
$A^{a}_{i}=\chi_{(i)}(t) \delta^{a}_{i}$, 
where for the diagonal Bianchi class A, if the off-diagonal components are zero initially, then they will remain so for the whole evolution. Isotropy requires all diagonal components $\chi_{(i)}$ to be equal, thus reducing the Yang-Mills field degrees of freedom 
to a single scalar field with a quadratic potential, which excludes its chaotic behavior A general treatment of diagonal Yang-Mills Bianchi class A spacetimes using an orthonormal frame approach and expansion normalized variables can be found in~\cite{YM05}. However, a lack of suitable renormalized matter variables has prevented so far to obtain a global dynamical systems formulation on a compact state-space suitable for asymptotic description of those models. 

The present formulation can be extended to more general Bianchi models, where the isotropic case treated here appears as a special invariant set. 

 
\section*{Acknowledgments}
The authors thank FCT project PTDC/MAT-ANA/1275/2014 and CAMGSD, IST, Univ. Lisboa, through FCT project UIDB/MAT/04459/2020. FCM and VB thank CMAT, Univ. Minho, through FCT project Est-OE/MAT/UIDB/00013/2020 and FEDER Funds COMPETE. VB thanks FCT for the Ph.D. grant PD/BD/142891/2018.


\begin{thebibliography}{10}

\bibitem{WainElis} J.~Wainwright and G.~F.~R.~Ellis, \emph{Dynamical Systems in Cosmology} , Cambridge University  Press, (1997)

\bibitem{Col03} A.~A.~Coley, \emph{Dynamical Systems and Cosmology}, Kluwer Academic Publishers, Dordrecht, (2003).

\bibitem{pys-rep} A. Maleknejad, M. M. Sheikh-Jabbari, J. Soda, "Gauge fields and inflation", Phys. Rep., {\bf 528}  (2013) 161-261. 

\bibitem{Choquet} Y. Choquet-Bruhat, \emph{General Relativity and the Einstein Equations}, Oxford Mathematical Monographs, Oxford University Press, (2009).

\bibitem{bentoetal93} M. C. Bento, O. Bertolami, P. V. Moniz, J. M. Mourao and P. M. Sa, "On the cosmology of massive vector fields with $SO(3)$ global symmetry", Class. Quantum Grav., {\bf 10} (1993) 2.

\bibitem{GV91} D. V. Galt'sov and M. S. Volkov, "Yang-Mills cosmology. Cold matter for a hot universe", Physics Letters B, {\bf 256} (1991) 17-21.



\bibitem{BYM05} J. D. Barrow, Y. Jin and Kei-ichi Maeda, "Cosmological coevolution of Yang-Mills fields and perfect fluids",
Phys. Rev. D, {\bf 72} (2005) 103512.

\bibitem{alhugg15} A.~Alho and C. Uggla, "Global dynamics and inflationary center manifold and slow-roll approximants",
J. Math. Phys., {\bf 56} (2015) 012502.

\bibitem{alhetal15} A.~Alho, J.~Hell and C.~Uggla, "Global dynamics and asymptotics for monomial scalar field potentials and perfect fluids", Class. Quant. Grav. {\bf 32} (2015) 145005.

\bibitem{alhetal17} A.~Alho and C.~Uggla, "Inflationary $\alpha$-attractor cosmology: A global dynamical systems perspective",
Phys. Rev. D, {\bf 95} (2017) 083517.

\bibitem{alhetal16} A.~Alho, S.~Carloni, and C.~Uggla, "On dynamical systems approaches and methods in $f(R)$ cosmology",
J. Cosmol. Astropart. Phys. {\bf 08} (2016) 064.

\bibitem{bessaetal} O. Bertolami, V. Bessa, J. P\'aramos, "Inflation with a massive vector field nonminimally coupled to gravity", 
Phys.~Rev.~D, {\bf 93} (2016) 064002.

\bibitem{SVM00}J.~A.~Sanders, F.~Verhulst and J.~Murdock. \emph{Averaging Methods in Nonlinear Dynamical Systems} Applied Mathematical Sciences, Springer: New-York (2000).

\bibitem{HW92} C. G. Hewitt and J. Wainwright, "Dynamical systems approach to titled Bianchi cosmologies: Irrotational models of type V", Phys. Rev. D, {\bf 46} (1992) 4242.


\bibitem{DK96} B. K. Darian and H. P. Kunzle, "Axially symmetric Bianchi I Yang - Mills cosmology
as a dynamical system, 
Class. Quantum Grav. {\bf 13} 2651 (1996).

	\bibitem{BL98} J. D. Barrow and L. Janna, "Chaos in the Einstein-Yang-Mills Equations", 
Phys. Rev. Lett., {\bf 80} (1998).

\bibitem{DK97} B. K. Darian and H. P. Kunzle, "Cosmological Einstein-Yang-Mills equations, 
J. Math. Phys., {\bf 38} (1997).


\bibitem{YM05} Y. Jin and Kei-ichi Maeda, "Chaos of Yang-Mills field in class A Bianchi spacetimes",
Phys. Rev. D, {\bf 71} (2005).




\end{thebibliography}
\end{document}